\newlength{\SK@tmpl}
\tikzset{%
  text width of/.code={\setlength{\SK@tmpl}{\widthof{\csuse{tikz@textfont}#1}}\pgfkeysalso{text width/.expanded=\the\SK@tmpl}},
  text height of/.code={\setlength{\SK@tmpl}{\heightof{\csuse{tikz@textfont}#1}}\pgfkeysalso{text height/.expanded=\the\SK@tmpl}},
  text depth of/.code={\setlength{\SK@tmpl}{\depthof{\csuse{tikz@textfont}#1}}\pgfkeysalso{text depth/.expanded=\the\SK@tmpl}},
  minimum height of/.code={\setlength{\SK@tmpl}{\totalheightof{\csuse{tikz@textfont}#1}}\pgfkeysalso{minimum height/.expanded=\the\SK@tmpl}},
  standard height/.style={text height of=M,text depth of=p,text width/.append style=natural height},
  natural height/.style={text height=,text depth=}}
\newif\ifdrawcapoints
\tikzset{%
  ca/.is family,
  ca/.search also={/tikz},
  ca,
  cabasestyle/.style=,
  cabase/.initial=1cm,
  castep/.initial=.2cm,
  caanglestep/.initial=1,
  calevelsx/.initial=0,
  calevelsy/.initial=0,
  calevels/.style={calevelsx=#1,calevelsy=#1},
  at calevels/.style={},
  capoints/.is if=drawcapoints,
  every carc/.style={interval},
  start/.initial={},
  end/.initial={},
  startlevel/.initial=1,
  endlevel/.initial=1,
  level/.style={startlevel=#1,endlevel=#1},
  label/.initial={},
  labelpos/.initial=0.5,
  nodepos/.style={labelpos=#1},
  labelangle/.initial={},
  nodeangle/.style={labelangle=#1},
  every label/.style={inner sep=2pt},
  startlabel/.initial={},
  startlabelsep/.initial=\CAcirclesep,
  startlabelanchoroffset/.initial=-90,
  startlabelpos/.is choice,
  startlabelpos/cw/.style={startlabelanchoroffset=+90,startlabelsep=\CAcirclesep,startlabelposauto=false},
  startlabelpos/ccw/.style={startlabelanchoroffset=-90,startlabelsep=\CAcirclesep,startlabelposauto=false},
  startlabelpos/outside/.style={startlabelanchoroffset=+180,startlabelsep=1pt+1.75\pgflinewidth,startlabelposauto=false},
  startlabelpos/inside/.style={startlabelanchoroffset=,startlabelsep=1pt+1.75\pgflinewidth,startlabelposauto=false},
  startlabelposauto/.initial=true,
  every startlabel/.style={inner sep=2pt},
  endlabel/.initial={},
  endlabelsep/.initial=\CAcirclesep,
  endlabelanchoroffset/.initial=+90,
  endlabelpos/.is choice,
  endlabelpos/cw/.style={endlabelanchoroffset=+90,endlabelsep=\CAcirclesep,endlabelposauto=false},
  endlabelpos/ccw/.style={endlabelanchoroffset=-90,endlabelsep=\CAcirclesep,endlabelposauto=false},
  endlabelpos/outside/.style={endlabelanchoroffset=+180,endlabelsep=1pt+1.75\pgflinewidth,endlabelposauto=false},
  endlabelpos/inside/.style={endlabelanchoroffset=,endlabelsep=1pt+1.75\pgflinewidth,endlabelposauto=false},
  endlabelposauto/.initial=true,
  every endlabel/.style={inner sep=2pt},
}
\newenvironment{camodel}[1][]{
  \begin{scope}[ca,#1]
    \path[ca,cabasestyle] (0,0) circle(\pgfkeysvalueof{/tikz/ca/cabase});
    \path[ca,at calevels] (0,0) ellipse
    (\pgfkeysvalueof{/tikz/ca/cabase}+\pgfkeysvalueof{/tikz/ca/calevelsx}*\pgfkeysvalueof{/tikz/ca/castep}
    and \pgfkeysvalueof{/tikz/ca/cabase}+\pgfkeysvalueof{/tikz/ca/calevelsy}*\pgfkeysvalueof{/tikz/ca/castep});
  }{
  \end{scope}
}
\newcommand{\CArc}[1]{%
  \global\def\CAinterval{}%
  \begin{scope}[interval/.append code={\global\def\CAinterval{true}},ca,#1]%
    \pgfkeys{/tikz/ca/cabase/.get=\CAbase}%
    \pgfkeys{/tikz/ca/castep/.get=\CAstep}%
    \pgfkeys{/tikz/ca/start/.get=\CAstart}%
    \pgfkeys{/tikz/ca/end/.get=\CAend}%
    \pgfkeys{/tikz/ca/startlevel/.get=\CAstartlevel}%
    \pgfkeys{/tikz/ca/endlevel/.get=\CAendlevel}%
    \pgfkeys{/tikz/ca/label/.get=\CAlabel}%
    \pgfmathparse{\CAbase+\CAstep*\CAstartlevel}%
    \dimdef\CAstartradius{\pgfmathresult pt}%
    \pgfmathparse{\CAbase+\CAstep*\CAendlevel}%
    \dimdef\CAendradius{\pgfmathresult pt}%
    \pgfmathparse{\CAstart*\pgfkeysvalueof{/tikz/ca/caanglestep}}%
    \let\CAstart\pgfmathresult
    \pgfmathparse{\CAend*\pgfkeysvalueof{/tikz/ca/caanglestep}}%
    \let\CAend\pgfmathresult
    \ifdimgreater{\CAend pt}{\CAstart pt}{%
      \edef\tmpa{\pgfkeysvalueof{/tikz/ca/startlabelposauto}}%
      \ifdefstring{\tmpa}{true}{%
        \tikzset{ca,startlabelpos=cw}%
      }{}%
      \edef\tmpa{\pgfkeysvalueof{/tikz/ca/endlabelposauto}}%
      \ifdefstring{\tmpa}{true}{%
        \tikzset{ca,endlabelpos=ccw}%
      }{}%
    }{}%
    \pgfmathparse{\CAstart\pgfkeysvalueof{/tikz/ca/startlabelanchoroffset}}%
    \let\CAstartlabelanchor\pgfmathresult
    \pgfmathparse{\CAend\pgfkeysvalueof{/tikz/ca/endlabelanchoroffset}}%
    \let\CAendlabelanchor\pgfmathresult
    \edef\CAnode{\pgfkeysvalueof{/tikz/ca/labelangle}}%
    \ifdefempty{\CAnode}{%
      \edef\CAnode{\pgfkeysvalueof{/tikz/ca/labelpos}}%
      \pgfmathparse{\CAstart+(\CAend-\CAstart)*\CAnode}%
      \let\CAnode\pgfmathresult
    }{%
      \edef\tmpa{\pgfkeysvalueof{/tikz/ca/caanglestep}}%
      \pgfmathparse{\CAnode*\tmpa}%
      \let\CAnode\pgfmathresult
    }%
    \ifdimequal{\CAstart pt}{\CAend pt}{%
      \dimdef\CAcirclesep{1pt+1.75\pgflinewidth}%
      \fill (\CAstart:\CAstartradius)
      circle (1pt+1.75\pgflinewidth);
      \ifdefempty{\CAlabel}{}{%
        \ifcsstring{tikz@auto@anchor@direction}{left}{%
          \path
          (canvas polar cs:angle=\CAstart+1,radius=\CAstartradius-1pt-1.75\pgflinewidth) --
          (canvas polar cs:angle=\CAstart-1,radius=\CAstartradius-1pt-1.75\pgflinewidth)
          node[midway,auto,ca,every label] {\CAlabel};
        }{%
          \path
          (canvas polar cs:angle=\CAstart+1,radius=\CAstartradius+1pt+1.75\pgflinewidth) --
          (canvas polar cs:angle=\CAstart-1,radius=\CAstartradius+1pt+1.75\pgflinewidth)
          node[midway,auto,ca,every label] {\CAlabel};
        }%
      }%
    }{%
      \dimdef\CAcirclesep{0pt}%
      \ifdimequal{\CAstartradius}{\CAendradius}{%
        \draw[ca,every carc] (\CAstart:\CAstartradius)
        arc[start angle=\CAstart,end angle=\CAend,radius=\CAstartradius];
      }{%
        \draw[ca,every carc,-] plot[smooth,variable=\t,domain=\CAstart:\CAend]
        ({cos(\t)*(\CAstartradius+(\t-\CAstart)/(\CAend-\CAstart)*(\CAendradius-\CAstartradius))},
        {sin(\t)*(\CAstartradius+(\t-\CAstart)/(\CAend-\CAstart)*(\CAendradius-\CAstartradius))});
        \ifdefempty{\CAinterval}{}{%
          \draw (\CAstart:\CAstartradius-1pt-1.25\pgflinewidth) --
          (\CAstart:\CAstartradius+1pt+1.25\pgflinewidth);
          \draw (\CAend:\CAendradius-1pt-1.25\pgflinewidth) --
          (\CAend:\CAendradius+1pt+1.25\pgflinewidth);
        }
      }%
      \pgfmathparse{\CAstartradius+(\CAnode-\CAstart)/(\CAend-\CAstart)*(\CAendradius-\CAstartradius)}%
      \edef\CAnoderadius{\pgfmathresult pt}%
      \ifdefempty{\CAlabel}{}{%
        \path
        (canvas polar cs:angle=\CAnode+1,radius=\CAnoderadius) --
        (canvas polar cs:angle=\CAnode-1,radius=\CAnoderadius)
        node[midway,auto,ca,every label] {\CAlabel};
      }%
      \ifdrawcapoints
        \fill (canvas polar cs:angle=\CAnode,radius=\CAnoderadius) circle (1pt+1.75\pgflinewidth);
      \fi
    }%
    \path (\CAstart:\CAstartradius) node[outer sep/.expanded=\pgfkeysvalueof{/tikz/ca/startlabelsep},anchor/.expanded=\CAstartlabelanchor,ca,every startlabel] {\pgfkeysvalueof{/tikz/ca/startlabel}};
    \path (\CAend:\CAendradius) node[outer sep/.expanded=\pgfkeysvalueof{/tikz/ca/endlabelsep},anchor/.expanded=\CAendlabelanchor,ca,every endlabel] {\pgfkeysvalueof{/tikz/ca/endlabel}};
  \end{scope}%
}
\tikzset{interval/.style={interval-interval,shorten >=-.5\pgflinewidth,shorten <=-.5\pgflinewidth}}
\newcommand{\circl}{\mathbb{C}}
\newcommand{\function}[2]{\colon #1 \rightarrow #2}
\newcommand{\setdef}[2]{\left\{ \hspace{0.5mm} #1: \hspace{0.5mm} #2 \right\}}
\renewcommand{\refeq}[1]{(\ref{eq:#1})}
\newcommand{\Set}[1]{\big\{ #1 \big\}}
\newcommand{\calH}{\ensuremath{\mathcal{H}}}
\newcommand{\calA}{\ensuremath{\mathcal{A}}}
\newcommand{\calB}{\ensuremath{\mathcal{B}}}
\newcommand{\calC}{\ensuremath{\mathcal{C}}}
\newcommand{\calI}{\ensuremath{\mathcal{I}}}
\newcommand{\calJ}{\ensuremath{\mathcal{J}}}
\newcommand{\calK}{\ensuremath{\mathcal{K}}}
\newcommand{\calS}{\ensuremath{\mathcal{S}}}
\newcommand{\calT}{\ensuremath{\mathcal{T}}}
\newcommand{\calO}{\ensuremath{\mathcal{O}}}
\newcommand{\bbI}[1]{\mathbb{I}(#1)}
\newcommand{\cliq}[1]{{\calC(#1)}}
\newcommand{\bund}[1]{{\calB(#1)}}
\newcommand{\tied}{\bowtie}
\newcommand{\complexityclass}[1]{\textsf{\upshape #1}}
\newcommand{\AC}[1]{\ensuremath{\complexityclass{AC}^{\complexityclass{#1}}}}
\def\@gifnextchar#1#2#3{\let\@tempe#1\def\@tempa{#2}\def\@tempb{#3}%
  \futurelet\@tempc\@gifnch}
\def\@gifnch{\ifx\@tempc\@sptoken\let\@tempd\@tempb%
  \else\ifx\@tempc\@tempe\let\@tempd\@tempa\else\let\@tempd\@tempb\fi\fi\@tempd}
\DeclarePairedDelimiter{\SK@setone}{\lbrace}{\rbrace}
\DeclarePairedDelimiterX{\SK@settwo}[2]{\lbrace}{\rbrace}{#1:#2}
\newcommand{\set}{\@ifstar{\SK@set{*}}{\SK@oset}}
\newcommand{\SK@oset}[1][]{\ifstrempty{#1}{\SK@set{}}{\SK@set{[#1]}}}
\newcommand{\SK@set}[2]{\@gifnextchar\bgroup{\SK@@set{#1}{#2}}{\SK@setone#1{#2}}}
\newcommand{\SK@@set}[3]{\ifstrempty{#1}{%
    \SK@settwo{#2}{#3}%
  }{%
    \SK@settwo#1{#2}{\begin{array}{@{}l@{}}#3\end{array}}%
  }}
\def\afterthmseparator{.}
\renewcommand{\@begintheorem}[2]{\trivlist
      \item[\hskip \labelsep{\bf #1\ #2\unskip\afterthmseparator}]\em}
\renewcommand{\@opargbegintheorem}[3]{\trivlist
      \item[\hskip \labelsep{\bf #1\ #2\ (#3)\unskip\afterthmseparator}]\em}
\newtheorem{theorem}{Theorem}[section]
\newtheorem{lemma}[theorem]{Lemma}
\newtheorem{definition}[theorem]{Definition}
\newtheorem{remark}[theorem]{Remark}
\newcommand{\bull}{\mbox{$\;\;\;$\vrule height .9ex width .8ex depth -.1ex}}
\newenvironment{proof}{\par\smallbreak\noindent{\bf Proof.~}}
{\unskip\nobreak\hfill \bull \par\medbreak}
\newenvironment{bfenumerate}%
{%
\mbox{}\begin{enumerate}}{\end{enumerate}}
\newcounter{claim}
\renewcommand{\theclaim}{\Alph{claim}}
\newenvironment{claim}{\refstepcounter{claim}%
\par\medskip\par\noindent{\it Claim~\theclaim.~}~\it}%
{\par\smallskip\par}
{$\,\triangleleft$\par\medskip\par}
\newif\ifnotesw\noteswtrue
\title{On the Isomorphism Problem for\\ Helly Circular-Arc Graphs}
\author{Johannes Köbler\quad Sebastian Kuhnert\thanks{Supported by DFG grant  KO 1053/7--1.}\quad 
Oleg Verbitsky\thanks{%
Supported by DFG grant VE 652/1--1.
On leave from the Institute for Applied Problems of Mechanics and Mathematics,
Lviv, Ukraine.}\\[2mm]
\normalsize
Humboldt-Universität zu Berlin,
Institut für Informatik\\ 
\normalsize
Unter den Linden 6,
D-10099 Berlin}
\date{}
\begin{document}

\maketitle

\begin{abstract}
  The isomorphism problem is known to be efficiently solvable for interval
  graphs, while for the larger class of circular-arc graphs its complexity
  status stays open. We consider the intermediate class of intersection graphs
  for families of circular arcs that satisfy the Helly property. We solve the
  isomorphism problem for this class in logarithmic space. If an input graph has
  a Helly circular-arc model, our algorithm constructs it canonically, which
  means that the models constructed for isomorphic graphs are equal.
\end{abstract}

\section{Introduction}

An \emph{intersection representation} of a graph~$G$ is a mapping~$\alpha$ of
the vertex set~$V(G)$ onto a family~$\calA$ of sets such that vertices
$u$~and~$v$ of~$G$ are adjacent if and only if the sets
$\alpha(u)$~and~$\alpha(v)$ have a nonempty intersection. The family~$\calA$ is
called an \emph{intersection model} of~$G$. $G$~is an \emph{interval graph} if
it admits an intersection model consisting of intervals of reals (or,
equivalently, intervals of consecutive integers). The larger class of
\emph{circular-arc (CA) graphs} arises if we consider intersection models
consisting of arcs on a circle. These two archetypal classes of intersection
graphs have important applications, most noticeably in computational genomics,
and have been intensively studied for decades in graph theory and algorithmics;
for an overview see e.g.~\cite{Spinrad}. In general, fixing a class of
admissible intersection models, we obtain the corresponding class of
intersection graphs.

In the \emph{canonical representation problem} for a class~$\mathcal{C}$ of
intersection graphs, we are given a graph $G\in \mathcal{C}$ and have to compute
its intersection representation~$\alpha$ so that isomorphic graphs receive equal
intersection models. This subsumes both recognition of~$\mathcal{C}$ and
isomorphism testing for graphs in~$\mathcal{C}$. In their seminal
work~\cite{BoothL76,LuekerB79}, Booth and Lueker solve both the representation
and the isomorphism problems for interval graphs in linear time. Together with
Laubner, we designed a canonical representation algorithm for interval graphs
that takes logarithmic space~\cite{KoeblerKLV11}. 

The case of CA~graphs remains a challenge up to now. While a circular-arc
intersection model can be constructed in linear time (McConnell~\cite{McC03}),
no polynomial-time isomorphism test for CA~graphs is currently known (though
some approaches~\cite{Hsu95} have appeared in the literature; see the discussion
in~\cite{CurtisLMNSSS13}). A few natural subclasses of CA~graphs have received
special attention among researchers. In particular, for proper CA~graphs both
the recognition and the isomorphism problems are solved in linear time,
respectively, 
in~\cite{LinSS08} and in~\cite{CurtisLMNSSS13}, and in logarithmic space in~\cite{KoeblerKV12}.
The latter result actually gives an logspace algorithm for canonical
representation of proper CA~graphs, and such an algorithm is also known for unit
CA~graphs~\cite{Sou14}. 
The history of the isomorphism
problem for circular-arc graphs is surveyed in more detail by
Uehara~\cite{Ueh13}.

Here we are interested in the class of \emph{Helly circular-arc (HCA) graphs}.
Those are graphs that admit circular-arc models having the \emph{Helly property},
which requires that every family of arcs with nonempty pairwise intersections has a nonempty
overall intersection. Obeying this property is assumed in the representation
problem for HCA~graphs. Since any family of intervals has the Helly property,
and the circles of length at least~$4$ are~HCA but not interval,
the canonical representation problem for HCA~graphs generalizes the canonical
representation problem for interval graphs. On the other hand, not every
CA~model is Helly; see Fig.~\ref{fig:non-hca} for examples. Joeris et~al.\
characterize HCA~graphs among CA~graphs by a family of forbidden induced
subgraphs~\cite{JLMSS11}.

\begin{figure}
  \centering
  \begin{tikzpicture}[baseline=0]
    \node[left,inner sep=0pt] at (-1.25,1.25) {(a)};
    \begin{camodel}[caanglestep=30,rotate=-15,castep=.15cm]
      \CArc{start=0,end=7,startlevel=0,endlevel=7/4}
      \CArc{start=4,end=11,startlevel=0,endlevel=7/4}
      \CArc{start=8,end=15,startlevel=0,endlevel=7/4}
      \CArc{start=1,end=2,startlevel=9/4,endlevel=10/4}
      \CArc{start=5,end=6,startlevel=9/4,endlevel=10/4}
      \CArc{start=9,end=10,startlevel=9/4,endlevel=10/4}
    \end{camodel}
    \begin{scope}[xshift=2.5cm,scale=.9,
      gv/.style={fill,circle,inner sep=1.5pt},
      ge/.style={draw,shorten <=1pt,shorten >=1pt}]
      \node[gv] (1) at (90:1cm) {};
      \node[gv] (2) at (210:1cm) {} edge[ge] (1);
      \node[gv] (3) at (330:1cm) {} edge[ge] (1) edge[ge] (2);
      \node[gv] (12) at (150:1cm) {} edge[ge] (1) edge[ge] (2);
      \node[gv] (23) at (270:1cm) {} edge[ge] (2) edge[ge] (3);
      \node[gv] (31) at (30:1cm) {} edge[ge] (3) edge[ge] (1);
    \end{scope}
  \end{tikzpicture}\hfil
    \begin{tikzpicture}[baseline=0]
    \node[left,inner sep=0pt] at (-1.25,1.25) {(b)};
    \begin{camodel}[caanglestep=30,rotate=15,castep=.15cm]
      \CArc{start=0,end=5,startlevel=0,endlevel=1.25}
      \CArc{start=4,end=9,startlevel=0,endlevel=1.25}
      \CArc{start=8,end=13,startlevel=0,endlevel=1.25}
      \CArc{start=2,end=3,startlevel=1.5,endlevel=1.75}
      \CArc{start=6,end=7,startlevel=1.5,endlevel=1.75}
      \CArc{start=10,end=11,startlevel=1.5,endlevel=1.75}
    \end{camodel}
    \begin{scope}[xshift=2.75cm,scale=.9,
      gv/.style={fill,circle,inner sep=1.5pt},
      ge/.style={draw,shorten <=1pt,shorten >=1pt}]
      \node[gv] (1) at (90:1cm) {};
      \node[gv] (2) at (210:1cm) {} edge[ge] (1);
      \node[gv] (3) at (330:1cm) {} edge[ge] (1) edge[ge] (2);
      \node[gv] (4) at (90:1.5cm) {} edge[ge] (1);
      \node[gv] (5) at (210:1.5cm) {} edge[ge] (2);
      \node[gv] (6) at (330:1.5cm) {} edge[ge] (3);
    \end{scope}
  \end{tikzpicture}
  \caption{Two non-Helly CA~models and their intersection graphs. The graph
    in~(a) admits an HCA~model, while the graph in~(b) does
    not.}\label{fig:non-hca}
\end{figure}
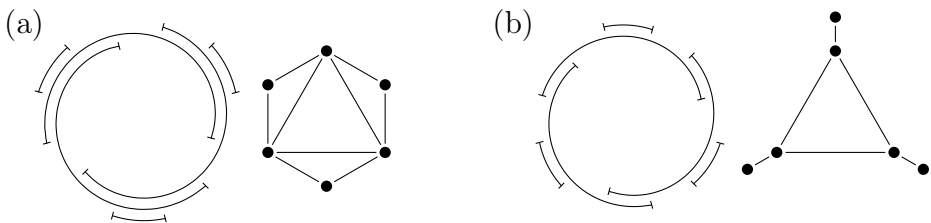

HCA~graphs were introduced by Gavril under the name of \textTheta~circular-arc
graphs~\cite{Gavril74}. Gavril gave an $O(n^3)$~time representation algorithm
for HCA~graphs. Hsu improved this to $O(nm)$~\cite{Hsu95}. Recently, Joeris
et~al.\ gave a linear time representation algorithm~\cite{JLMSS11}. The fastest
known isomorphism algorithm for HCA~graphs is due to Curtis et~al.\ and works in
linear time~\cite{CurtisLMNSSS13}. Chen gave a parallel
$\AC2$~algorithm~\cite{Chen96}.

We aim at designing space efficient algorithms.
In~\cite{KoeblerKV13} we already presented a logspace canonical representation algorithm 
for HCA~graphs. Our approach in~\cite{KoeblerKV13} uses techniques developed by 
McConnell in~\cite{McC03}, and the algorithm is rather intricate. 
Now we suggest an alternative approach that is independent of~\cite{McC03}.
The new algorithm admits a much simpler analysis and exploits some new ideas that may
be of independent interest.

\begin{theorem}\label{thm:main}
  The canonical representation problem for the class of Helly circular-arc graphs
is solvable in logspace.
\end{theorem}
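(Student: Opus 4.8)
The plan is to pass from the graph $G$ to the \emph{family of its maximal cliques}, to produce a \emph{canonical cyclic arrangement} of this family, and then to read off a canonical Helly model from that arrangement. The bridge is the classical clique-point description of Helly models: in any HCA model the Helly property assigns to each maximal clique a point common to all of its arcs, and the cyclic order of these points on the circle is a cyclic order $C_1,\dots,C_k$ of the maximal cliques in which, for every vertex $v$, the cliques containing $v$ form a cyclically consecutive block --- equivalently, the clique--vertex incidence matrix has the \emph{circular-ones} property. Conversely, from such a cyclic order one recovers an HCA model by assigning to $v$ the arc that spans exactly its block of cliques (the whole circle if the block is all of them); that this reproduces $G$ is immediate, since $u$ and $v$ are adjacent in $G$ iff they lie in a common maximal clique iff their blocks intersect iff their arcs meet, the middle equivalence using that a clique belongs to $v$'s block \emph{only if} it contains $v$. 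So it suffices to perform, in logspace, (i)~the computation of the maximal cliques of $G$, (ii)~the decision whether they admit a cyclic order with the block property, together with the computation of a canonical such order, and (iii)~the reconstruction of the arcs; the theorem then follows because logspace is closed under composition.

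Step~(i): an HCA graph has only $O(n)$ maximal cliques, and they admit a purely combinatorial description (each is the set of arcs through a clique point, hence a ``locally maximal'' vertex subset). Listing them in logspace can be done along the lines of the interval-graph case treated in~\cite{KoeblerKLV11}, to which the HCA case can in fact be reduced vertex by vertex, using that for every vertex $v$ the non-neighbourhood $G-N[v]$ induces an interval graph. If the list produced exceeds the $O(n)$ bound, or later fails the test in~(ii), we reject; by the circular-ones characterisation this is precisely the recognition of the class, with the forbidden-subgraph characterisation of Joeris et~al.~\cite{JLMSS11} available as an alternative certificate.

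Step~(ii) is the technical core, and where I expect essentially all of the difficulty to lie. It is the \emph{circular} analogue of the consecutive-ones/interval canonisation carried out in logspace in~\cite{KoeblerKLV11}, and it differs from the linear case in two essential ways. First, the family of admissible cyclic orders carries a dihedral symmetry (rotations \emph{and} a reflection) rather than the single order-two symmetry of the linear case, so after representing all admissible orders by a tree-like structure (a circular analogue of the PQ-tree) one must select, in logspace, a canonical representative over the whole dihedral orbit --- feasible because the cyclic order and the block structure of each vertex depend only on the isomorphism type of $G$. Second, the structure \emph{degenerates} precisely when $G$ is also an interval graph: then the arcs of some HCA model fail to cover the circle, the circle may be ``cut'' at a continuum of inequivalent places, and the cyclic freedom collapses to the linear freedom already handled in~\cite{KoeblerKLV11}. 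This case is detected in logspace (equivalently: test whether $G$ is an interval graph), and there we simply output the canonical interval model of~\cite{KoeblerKLV11} embedded into the circle at a canonically chosen location. In the genuinely circular case the cyclic arrangement is rigid up to the dihedral group, and one must still neutralise the local reordering freedom created by twins and, more generally, by modules of $G$ (a module of an HCA graph, with its universal vertices adjoined, is again an HCA or interval graph), which is done by ordering each module canonically via recursion on its substructure.

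Step~(iii) is routine: given the canonical cyclic order $C_1,\dots,C_k$, place the clique points at equally spaced positions of a circle, leave a small gap between consecutive ones, and let the arc of $v$ run from just before the first to just after the last clique point of its block (and be the whole circle when the block is everything). Since the cyclic order and the blocks depend only on the isomorphism type of $G$, isomorphic graphs receive the same family of arcs, so this is a canonical representation. The one genuinely hard ingredient is Step~(ii): building and canonising the circular-ones structure within logspace while simultaneously isolating the interval-graph degeneracy and taming both the dihedral symmetry of the circle and the symmetry coming from modules.
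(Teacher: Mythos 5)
Your overall architecture --- compute $\cliq G$, canonize the circular-ones structure of the clique--vertex incidence (equivalently, canonically represent the CA~hypergraph $\bund G$), and read the arcs off the canonical circular order --- is exactly the reduction the paper performs in Section~\ref{s:reducing} (Lemma~\ref{lem:reducing} together with Gavril's theorem, Lemma~\ref{lem:Gavril}); moreover your Step~(ii), which you identify as the technical core, is available as a black box, since a logspace canonical representation scheme for CA~hypergraphs (with multiplicities, handling the dihedral symmetry and the degenerate interval case) is already known from~\cite{KoeblerKV12}. The genuine gap is Step~(i). For interval graphs the maxcliques are computable in logspace because every maxclique contains an \emph{essential} edge $uv$ and then equals $N[u]\cap N[v]$; this property (maximal clique irreducibility) is what the argument of~\cite{KoeblerKLV11} rests on, and it \emph{fails} for HCA graphs: in the $3$-sun of Fig.~\ref{fig:non-hca}(a) every edge of the central triangle lies in two distinct maxcliques, so that triangle is not of the form $N[u]\cap N[v]$ for any edge. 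Your proposed fix --- reducing ``vertex by vertex'' to the interval case via the observation that $G-N[v]$ is an interval graph --- does not recover $\cliq G$: a maxclique of $G-N[v]$ need not be a maxclique of $G$ (for an outer vertex $v$ of the $3$-sun, $G-N[v]$ is a path whose maxcliques are proper subsets of maxcliques of $G$), while the maxcliques containing a fixed vertex $v$ live inside $N[v]$, which need not induce an interval graph. So no logspace procedure for listing the maximal cliques is actually supplied, and this is precisely where all of the paper's work is concentrated.

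The paper circumvents this by running your reduction \emph{backwards}: it first produces an arbitrary, non-canonical Helly arc model, then reads off all maxcliques as the point-sets of that model, after which canonicity is ``for free'' via~\cite{KoeblerKV12}. To obtain that first model it shows that every nonempty HCA graph contains at least \emph{one} essential edge (Lemma~\ref{lem:essential-HCA}), hence one computable maxclique $C$; it computes, purely from the adjacency structure via the correspondence between maxclique bundles and closed neighborhoods (Lemma~\ref{lem:BvsN}), the pairwise-intersection matrix of a sharp Helly model; it flips the arcs of $C$ so that the hypothetical model becomes an \emph{interval} system, reconstructs that interval system from its intersection matrix in logspace~\cite{KoeblerKW12}, and flips back. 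To salvage your outline you would need a direct logspace algorithm for $\cliq G$ that does not presuppose a model, which your proposal does not provide.
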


Note that solvability in
logspace implies solvability in logarithmic time by a CRCW~PRAM with
polynomially many parallel processors, i.e., in $\AC1$.
Prior to our work, no $\AC1$ algorithm was known for recognition and 
isomorphism testing of HCA~graphs.

In general, solvability of the isomorphism problem for a non-trivial
class of graphs in logarithmic space is an interesting result because
the general graph isomorphism problem is known to be $\complexityclass{DET}$-hard~\cite{Toran04}
and, therefore, $\complexityclass{NL}$-hard. It is also interesting that for some classes
of intersection graphs, the isomorphism problem is as hard as in general.
For example, Uehara~\cite{Uehara08} shows this for intersection graphs
of axis-parallel rectangles in the plane. Note that any family of
such rectangles has the Helly property.

\paragraph{Our strategy.}
Recall that a hypergraph~$\calH$ is \emph{interval} (resp.\ \emph{circular-arc}) if
it is isomorphic to a hypergraph whose hyperedges are intervals of integers
(resp.\ arcs of a discrete circle). Such an isomorphism is called an interval
(resp.\ arc) representation of~$\calH$.
The overall idea of our algorithm is, like in our approach
to interval graphs in~\cite{KoeblerKLV11}, to exploit the relationship
between an input graph~$G$ and the dual of its maxclique hypergraph,
which will be denoted by~$\bund G$. Fulkerson and Gross~\cite{FulkersonG65} established
that $G$~is an interval graph iff $\bund G$~is an interval hypergraph.
Moreover, represented as an interval system, $\bund G$~can serve as an intersection
model of~$G$. Our approach in~\cite{KoeblerKLV11} consists, therefore, of two
steps: first, construct~$\bund G$ (or, equivalently, find all maxcliques in~$G$)
and, second, design a canonical representation algorithm for interval \emph{hypergraphs}
and apply it to~$\bund G$. The first step is implementable in logspace
because all connected interval graphs are \emph{maximal clique irreducible},
which means that every maxclique~$C$ contains an edge~$uv$ that is contained in
no other maxclique and, therefore, $C$~is equal to the common neighborhood of
$u$~and~$v$.

The Fulkerson-Gross theorem is extended to the class of HCA~graphs by Gavril~\cite{Gavril74}:
$G$~is a HCA~graph iff $\bund G$~is a CA~hypergraph. Also in this case, $\bund
G$~can serve as an isomorphic image of an intersection model for~$G$.
The canonical representation problem for CA~hypergraphs is solved in logspace in~\cite{KoeblerKV12}.
However, the similarity between interval and HCA~graphs ends there because
HCA~graphs are in general not maximal clique irreducible.

Though we are not able to find all maxcliques of an HCA~graph~$G$ directly,
the discussion above shows that the \emph{canonical representation} problem
for HCA~graphs is logspace reducible to the \emph{representation} problem,
where we need just to construct an HCA~representation
and do not need to take care of canonicity. Indeed, once
we have an arbitrary HCA~model of an input graph~$G$, we get all maxcliques of~$G$
by inspection of the sets of arcs sharing a common point.
After all the maxcliques are found, we form the hypergraph~$\bund G$
and compute its canonical representation according to~\cite{KoeblerKV12}
(the details are given in Section~\ref{s:reducing}).

It remains to explain how we compute an HCA~representation~$\alpha$ of~$G$.
It is handy to assume that, if $G$~has $n$~vertices, then its HCA~model~$\alpha(G)$
has $2n$~points, and that no arc in~$\alpha(G)$ shares extreme points with others.
Given $C\subset V(G)$, let~$\alpha^C(G)$ denote the arc system obtained
from~$\alpha(G)$ by flipping the arc~$\alpha(v)$ for all $v\in C$, that is,
by replacing~$\alpha(v)$ with the other arc on the same circle that has
the same extreme points. We make use of a simple consequence of the Helly property:
If $C$~is a maxclique, then $\alpha^C(G)$~becomes an \emph{interval} system.
As was said, we cannot find all maxcliques of~$G$ at once.
However, we are able to find \emph{one} of them, which will be used
for the flipping operation. Our next goal is to compute 
the interval system~$\alpha^C(G)$ up to isomorphism.
Once this is done, we obtain the desired~$\alpha$
(or its isomorphic version) by performing the $C$-flipping for~$\alpha^C(G)$
(note that $(\alpha^C)^C=\alpha$). The flipping operation is
considered in detail in Section~\ref{s:flipping}.

The interval system~$\alpha^C(G)$ is constructed as follows.
In Section~\ref{s:sharpening} we argue that we always can suppose that
$\alpha$~has an additional property: If two arcs intersect and cover the whole
circle, then each of the arcs contains both extreme points of the other.
Under this assumption we are able to compute the \emph{pairwise-intersection matrix}
$M_\alpha=(m_{uv})$, defined by $m_{uv}=|\alpha(u)\cap\alpha(v)|$, and then
also the pairwise-intersection matrix~$M_{\alpha^C}$ for the interval system~$\alpha^C(G)$.
Afterwards we use another result of Fulkerson and Gross saying that an interval system
is determined by its pairwise-intersection matrix up to isomorphism~\cite{FulkersonG65}.
Moreover, it can be reconstructed from the pairwise-intersection matrix
in logspace by an algorithm worked out in~\cite{KoeblerKW12}; see Section~\ref{sec:inters-sizes}.

The pairwise-intersection matrix~$M_\alpha$ is computed in Section~\ref{s:repr}.
The computation is based on the fact that any arc model~$\alpha(G)$ 
is, in a sense, close to~$\bund G$ and on some generic relations
between~$\bund G$ and the closed neighborhood hypergraph
of~$G$, that we explore in Section~\ref{s:bundle}.

\section{Formal definitions}

\paragraph{Hypergraphs.}

Recall that a \emph{hypergraph} is a pair
$(X,\calH)$, where $X=V(\calH)$ is a set of vertices and 
$\calH$~is a family of subsets of~$X$, called \emph{hyperedges}.
We will use the same notation~$\calH$ to denote a hypergraph and its hyperedge set.
A hypergraph has the \emph{Helly property}
if every set of pairwise intersecting hyperedges has a common vertex.
An isomorphism from
a hypergraph~$\calH$ to a hypergraph~$\calK$ is a bijection
$\phi\function{V(\calH)}{V(\calK)}$ such that
$H\in\calH$ iff $\phi(H)\in\calK$ for every $H\subseteq V(\calH)$.

\paragraph{Arc systems.}

For $n\ge3$, consider the directed cycle
on the vertex set $\{1,\dots,n\}$
with arrows from $i$~to~\mbox{$i+1$} and from~$n$ to~$1$.
An \emph{arc}~$A=[a,b]$ consists of the points
appearing in the directed path from~$a$ to~$b$.
The arc $A=\{1,\dots,n\}$ is called \emph{complete}.
If $A=[a,b]$ is not complete, $a$~and~$b$ are referred to as 
\emph{extreme points} of~$A$, the \emph{start point} and the \emph{end point} respectively.
An \emph{arc system}~$\calA$ is a hypergraph 
on the vertex set $\{1,\dots,n\}$ whose hyperedges are arcs.

\paragraph{Arc representations of hypergraphs.}

An \emph{arc representation} of a hypergraph~$\calH$ is an isomorphism~$\rho$
from~$\calH$ to an arc system~$\calA$.
It can be thought of as a circular ordering of~$V(\calH)$ where
every hyperedge is a segment of consecutive vertices.
The arc system~$\calA$ is referred to as an \emph{arc model} of~$\calH$.
The notions of an \emph{interval representation} and an \emph{interval model} of a hypergraph 
are introduced similarly, where \emph{interval} means an interval of integers. 
Hypergraphs having arc representations are called \emph{circular-arc (CA) hypergraphs},
and those having interval representations are called \emph{interval hypergraphs}.

A \emph{representation scheme} is a function defined on CA~hypergraphs
that on input~$\calH$ outputs an arc representation~$\rho_\calH$ of~$\calH$.
A representation scheme is called \emph{canonical} if isomorphic hypergraphs
$\calH\cong\calK$ always receive \emph{equal} arc models $\rho_\calH(\calH)=\rho_\calK(\calK)$.
In~\cite{KoeblerKV12} we designed a canonical representation scheme for CA~hypergraphs
computable in logarithmic space.
Moreover, our algorithm works for 
hypergraphs with multi-hyperedges (the \emph{multiplicity}~$c(H)$
of a hyperedge~$H$ has to be preserved under isomorphisms).

\paragraph{Graphs.}

 The vertex set of a graph~$G$ is denoted by~$V(G)$. 
The \emph{closed neighborhood}~$N[v]$ of a vertex~$v$
consists of~$v$ itself and all vertices adjacent to it. 
A vertex~$u$ is \emph{universal} if $N[u]=V(G)$. 
Two vertices~$u$ and~$v$ are \emph{twins}
if $N[u]=N[v]$. Note that twins are always adjacent.
The \emph{twin class}~$[v]$ of a vertex~$v$ consists of~$v$ itself
along with all its twins. Between two different twin classes there are
either none or all possible edges.
This allows us to consider the \emph{quotient graph}~$G'$
on the vertex set $V(G')=\Set{[v]}\relax{}_{v\in V(G)}$ where two distinct
twin classes $[v]$~and~$[u]$ are adjacent if $v$~and~$u$ are adjacent in~$G$.
The map $v\mapsto[v]$ from~$G$ to~$G'$ 
will be referred to as the \emph{quotient map}.

The \emph{intersection graph}~$\bbI\calH$ of a hypergraph~$\calH$ has the hyperedges
of~$\calH$ as vertices, and two such vertices $A,B\in\calH$ are adjacent if 
$A\cap B\neq\emptyset$. If $\calH$ has hyperedges of multiplicity greater
than~$1$, they become twins in~$\bbI\calH$.

\paragraph{Arc representations of a graph.}

An \emph{intersection representation of a graph}~$G$ is an
isomorphism~$\alpha\function{V(G)}{\calA}$ from~$G$ to the intersection
graph~$\bbI\calA$ of a hypergraph~$\calA$. The hypergraph $\calA$ is then called
an \emph{intersection model} of $G$. If $\calA$ is an arc system, we speak of
\emph{arc representation} and \emph{arc model} of $G$.
Graphs having arc representations are called \emph{circular-arc (CA) graphs}.
In other words, those are graphs isomorphic to the intersection graphs
of CA~hypergraphs. \emph{Helly circular-arc (HCA) graphs} are graphs having
\emph{Helly arc representations}, i.e., representations
providing arc models that obey the Helly property.

It is practical to allow an arc model $\calA$ to have multi-arcs and to require
that an arc representation of a graph $G$ maps twins in $G$ to the same arc in $\calA$
(of multiplicity more than 1). Also, one can require that universal vertices
of $G$ are mapped to the complete arc. Unless stated otherwise, we will consider
arc representations of this kind.
This causes no loss of generality as any such representation can be made injective
in logarithmic space.

A \emph{representation scheme for a class~$\mathcal{C}$ of CA~graphs} is a function 
that on input $G\in \mathcal{C}$ outputs an arc representation~$\alpha_G$ of~$G$.
A representation scheme \emph{for HCA~graphs} must produce Helly arc representations.
If a representation scheme produces equal arc models
for isomorphic input graphs, it is called \emph{canonical}.

\section{The maxclique bundle hypergraph}\label{s:bundle}

An inclusion-maximal clique in a graph~$G$ will be called \emph{maxclique}.
The \emph{maxclique hypergraph}~$\cliq G$ of a graph~$G$ has the same 
vertex set as~$G$ (i.e., $V(\cliq G)=V(G)$) and the maxcliques of~$G$ as its hyperedges.
We now define the \emph{bundle hypergraph}~$\bund G$, which is the dual of~$\cliq G$.
The hypergraph~$\bund G$ has the maxcliques of~$G$
as vertices (i.e., $V(\bund G)=\cliq G$) and a hyperedge~$B_v$ for each 
vertex~$v$ of~$G$, where $B_v$~consists of all maxcliques that contain~$v$. 
We call~$B_v$ the \emph{(maxclique) bundle} of~$v$.

We begin with general properties of the bundle hypergraph that
are true for any graph~$G$.
The first three lemmas summarize well-known facts (see, e.g.,
\cite[Theorem 1.14]{McMc99}); we include short proofs for the reader's convenience.

\begin{lemma}\label{lem:GtoBG}
Define the map $\beta_G\function{V(G)}{\bund G}$ by $\beta_G(v)=B_v$.
Then the correspondence $G\mapsto\beta_G$ is an intersection 
representation scheme for the class of all graphs.
\end{lemma}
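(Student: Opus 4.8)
The plan is to unwind the definition of an intersection representation and check the two conditions it imposes on $\beta_G$: that $\beta_G$ maps $V(G)$ onto the hyperedge family of $\bund G$, and that for any two \emph{distinct} vertices $u,v$ of $G$ we have $u$ adjacent to $v$ if and only if $B_u\cap B_v\neq\emptyset$. Since the construction $G\mapsto\beta_G$ makes sense verbatim for every graph, the "scheme" aspect is then automatic.

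First I would dispose of surjectivity. By the very definition of the bundle hypergraph, the hyperedge family of $\bund G$ is $(B_v)_{v\in V(G)}$, one hyperedge for each vertex, so $\beta_G\colon v\mapsto B_v$ is onto by construction. The only thing to record here is the bookkeeping around twins: if $N[u]=N[v]$, then $u$ and $v$ lie in exactly the same maxcliques, hence $B_u=B_v$, which is consistent with the stated convention that a representation may use multi-hyperedges and sends twins to a common hyperedge (with the appropriate multiplicity).

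Next I would prove the adjacency--intersection equivalence for distinct $u,v$. For the forward implication, if $u$ and $v$ are adjacent, then $\{u,v\}$ is a clique, and since $G$ is finite it extends to an inclusion-maximal clique $C$; this $C$ lies in both $B_u$ and $B_v$, so $B_u\cap B_v\neq\emptyset$. For the converse, any maxclique $C\in B_u\cap B_v$ contains both $u$ and $v$, and since $C$ is a clique and $u\neq v$, the vertices $u$ and $v$ are adjacent in $G$.

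I do not expect any real obstacle here: the statement is essentially a direct translation of the definitions. The two points worth stating explicitly are the clique-extension fact (every edge lies in some maxclique, which is where finiteness of $G$ enters) and the reason the adjacency condition must be restricted to distinct pairs --- indeed $B_v\cap B_v=B_v\neq\emptyset$ while $v$ is not adjacent to itself, so without the restriction the claim would fail.
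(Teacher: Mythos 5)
Your proof is correct and follows essentially the same route as the paper: both directions of the adjacency--intersection equivalence are argued exactly as in the paper's proof (extend an edge to a maxclique for one direction; observe that a common maxclique is a clique containing both vertices for the other). The extra remarks on surjectivity, twins, and the need to restrict to distinct pairs are sound bookkeeping that the paper leaves implicit.
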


\begin{proof}
Note that, for any two distinct vertices $u$~and~$v$,
  \begin{equation}
    \label{eq:BuBv}
 B_u\cap B_v\ne\emptyset\text{ iff $u$ and $v$ are adjacent}.   
  \end{equation}
Indeed, if $C\in B_u\cap B_v$, then both $u$~and~$v$ are in the clique~$C$
and hence adjacent. On the other hand,
if $u$~and~$v$ are adjacent, extend the set $\{u,v\}$ to a maxclique~$C$
and notice that $C\in B_u\cap B_v$. Thus, $\beta_G$~is an intersection
representation of~$G$.
\end{proof}

We now notice that the map~$\beta_G$ in Lemma~\ref{lem:GtoBG}
is, in fact, a Helly representation of the graph~$G$.

\begin{lemma}\label{lem:BGisHelly}
  $\bund G$ is a Helly hypergraph.
\end{lemma}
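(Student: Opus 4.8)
The plan is to verify the Helly property directly from the combinatorial structure of bundles. Suppose $B_{v_1}, \dots, B_{v_k}$ is a family of bundles that pairwise intersect; I must produce a maxclique $C$ lying in all of them, i.e.\ a maxclique containing every $v_i$. By the equivalence~\refeq{BuBv} established in the proof of Lemma~\ref{lem:GtoBG}, pairwise intersection of the bundles $B_{v_i}\cap B_{v_j}\ne\emptyset$ is exactly the statement that $v_i$ and $v_j$ are adjacent in~$G$ for all $i\ne j$. Hence $\{v_1,\dots,v_k\}$ is a clique in~$G$.

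The key step is then entirely standard: every clique of a (finite) graph extends to an inclusion-maximal clique. Concretely, greedily add vertices adjacent to all currently chosen vertices until none remain; this terminates since $V(G)$ is finite, and yields a maxclique $C \supseteq \{v_1,\dots,v_k\}$. By the very definition of the bundle, $C \in B_{v_i}$ for each~$i$ (a maxclique containing~$v_i$ is a vertex of $\bund G$ belonging to $B_{v_i}$). Therefore $C \in \bigcap_{i=1}^k B_{v_i}$, so the family has a common vertex, which is what the Helly property demands.

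I would also dispose of the two degenerate cases for completeness: if the family is empty or consists of a single bundle $B_v$, the claim is trivial (any maxclique works in the empty case; in the singleton case extend $\{v\}$ to a maxclique). One should also note that allowing multi-hyperedges causes no issue: a hyperedge of multiplicity greater than one intersects itself and every copy of itself, but since all copies are the same set of maxcliques, a common vertex still exists as soon as the underlying sets pairwise intersect.

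There is no real obstacle here — the lemma is essentially a repackaging of~\refeq{BuBv} together with the trivial fact that cliques extend to maxcliques, which is exactly why the excerpt labels it a well-known fact with a short proof. The only point deserving a word of care is making explicit that the \emph{pairwise} intersection hypothesis on bundles translates, via~\refeq{BuBv}, into the statement that the corresponding vertex set is a clique (not merely that the vertices are pairwise adjacent in some weaker sense) — but for graphs these are the same thing, so the argument closes immediately.
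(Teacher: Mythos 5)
Your proof is correct and follows exactly the paper's argument: use \refeq{BuBv} to see that pairwise intersecting bundles correspond to a clique, extend it to a maxclique $C$, and observe that $C$ lies in every bundle of the family. The additional remarks on degenerate cases and multiplicities are harmless but not needed.
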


\begin{proof}
 Suppose that $\Set{B_x}_{x\in X}$ is a family of bundles
with nonempty pairwise intersections. By~\refeq{BuBv}, $X$~is a clique.
Extend $X$ to a maxclique $C$. Then $C\in B_x$ for every $x\in X$.
\end{proof}

Moreover, $\beta_G$ is the smallest possible among all Helly representations
of~$G$ in the following sense. Given a map $\alpha\function{V(G)}{\calH}$
and a set $X\subseteq V(\calH)$, we consider the hypergraph
$\calH|_X=\set*{H\cap X}{H\in\calH}$ on the vertex set~$X$
and define the map $\alpha|_X\function{V(G)}{\calH|_X}$
by $\alpha|_X(v)=\alpha(v)\cap X$.

\begin{lemma}\label{lem:BGisHellyMin}
For every Helly intersection representation $\alpha\function{V(G)}{\calH}$
of a graph~$G$ there is a set $X\subseteq V(\calH)$ such that
$\alpha|_X$~is an intersection representation of~$G$ equivalent
with~$\beta_G$: there is a hypergraph isomorphism~$\psi$ from~$\bund G$
to~$\calH|_X$ such that $\alpha|_X=\psi\circ\beta_G$; see
Fig.~\ref{fig:def-equi}.
\end{lemma}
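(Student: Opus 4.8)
The plan is to exhibit the set $X$ explicitly as the image, under the given Helly representation $\alpha$, of a system of witness points — one for each maxclique of $G$. Concretely, for each maxclique $C$ of $G$, the set $\set{\alpha(v)}{v\in C}$ consists of pairwise intersecting hyperedges of $\calH$ (since $C$ is a clique and $\alpha$ is an intersection representation), so by the Helly property there is a vertex $p_C\in V(\calH)$ lying in every $\alpha(v)$ with $v\in C$. First I would fix one such witness $p_C$ for each maxclique $C$ (making an arbitrary choice where several are available), and set $X=\set{p_C}{C\in\cliq G}$. I then define $\psi\function{\bund G}{\calH|_X}$ on vertices by $\psi(C)=p_C$; on hyperedges this forces $\psi(B_v)=\set{p_C}{v\in C}=\alpha(v)\cap X$, which is exactly the requirement $\alpha|_X=\psi\circ\beta_G$.

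The verification then breaks into three routine-looking checks. First, that $\psi$ is well-defined as a map $\cliq G\to X$; this is immediate from the choice of witnesses, though I should note that a priori distinct maxcliques $C\ne C'$ might receive the same witness point $p_C=p_{C'}$, so I must be slightly careful. Second, that $\psi$ is a bijection $V(\bund G)\to X$: surjectivity is by construction, and injectivity is where I expect the main obstacle to sit — I need $p_C\ne p_{C'}$ whenever $C\ne C'$. Third, that $\psi$ carries hyperedges of $\bund G$ to hyperedges of $\calH|_X$ bijectively, i.e.\ that the assignment $B_v\mapsto\alpha(v)\cap X$ is a bijection between the two hyperedge families; since $\beta_G$ and $\alpha$ are both intersection representations of the \emph{same} graph $G$, this should reduce to the fact that $v\mapsto B_v$ and $v\mapsto\alpha(v)$ induce the same adjacency pattern, together with the twin/multiplicity conventions fixed in the "Arc representations of a graph" paragraph.

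The injectivity of $\psi$ is the crux, and here is how I would handle it. Suppose $p_C=p_{C'}=p$ for maxcliques $C\ne C'$. Since $p\in\alpha(v)$ for every $v\in C$ and every $v\in C'$, the point $p$ lies in $\alpha(v)$ for all $v\in C\cup C'$; hence every two vertices of $C\cup C'$ have intersecting $\alpha$-images, so $C\cup C'$ is a clique in $G$. But $C$ is a \emph{maximal} clique and $C\subsetneq C\cup C'$ unless $C'\subseteq C$; by maximality of $C'$ as well, we get $C=C'$, a contradiction. So distinct maxcliques get distinct witnesses, $\psi$ is injective, and combined with surjectivity it is a bijection of vertex sets.

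Finally I would assemble the pieces: $\psi$ is a vertex bijection $V(\bund G)\to X$, and for each $v$ we have $\psi(B_v)=\alpha(v)\cap X$, which is a hyperedge of $\calH|_X$; conversely every hyperedge of $\calH|_X$ has the form $\alpha(v)\cap X$ and is thus hit. Because $\alpha$ and $\beta_G$ represent the same graph, the multiplicities match up (twins of $G$ go to equal hyperedges on both sides), so $\psi$ is a hypergraph isomorphism $\bund G\to\calH|_X$ with $\alpha|_X=\psi\circ\beta_G$, as required. The only mildly delicate point beyond injectivity is checking that no hyperedge $\alpha(v)\cap X$ collapses unexpectedly — but $v\in C$ for at least one maxclique $C$, so $p_C\in\alpha(v)\cap X$ and the hyperedge is nonempty and behaves correctly, mirroring $B_v\ne\emptyset$ in $\bund G$.
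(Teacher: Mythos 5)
Your proposal is correct and follows essentially the same route as the paper: pick a Helly witness point $p_C\in\bigcap_{v\in C}\alpha(v)$ for each maxclique $C$, prove the witnesses are pairwise distinct by observing that a shared witness would force $C\cup C'$ to be a clique (contradicting maximality), and let $\psi(C)=p_C$. The paper likewise states $\alpha(v)\cap X=\set{p_C}{C\ni v}$ without further elaboration, so your level of detail matches the original.
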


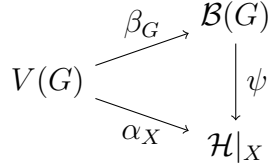
\begin{figure}
  \centering
  \begin{tikzpicture}
    \node (V) at (180:2cm) {$V(G)$};
    \node[minimum width=3em] (B) at (60:1cm)  {$\calB(G)$};
    \node[minimum width=3em] (H) at (-60:1cm) {$\calH|_X$};
    \draw[->] (V) -- node[above] {$\beta_G$} (B);
    \draw[->] (V) -- node[below] {$\alpha_X$} (H);
    \draw[->] (B) -- node[right] {$\psi$} (H);
  \end{tikzpicture}
  \caption{Lemma \protect\ref{lem:BGisHellyMin}:
the intersection representations $\alpha|_X$~and~$\beta_G$ of~$G$
are equivalent up to an isomorphism~$\psi$ between the intersection
models.}\label{fig:def-equi}
\end{figure}

\begin{proof}
For each $C\in\cliq G$, consider the family of hyperedges $\calH_C=\Set{\alpha(v)}_{v\in C}$.
Since $\alpha$~is an intersection representation of~$G$, all pairwise intersections
of the family members are nonempty. Since $\calH$~is a Helly hypergraph,
the overall intersection~$\bigcap\calH_C$ is nonempty. We fix a point $x_C\in\bigcap\calH_C$
and let $X=\set*{x_C}{C\in\cliq G}$. 
Note that $x_C\ne x_{C'}$ if $C\ne C'$
(indeed, the equality $x_C=x_{C'}$ implies that the family $\calH_C\cup\calH_{C'}$
has nonempty overall intersection; therefore, the union $C\cup C'$ of two maxcliques
must be a clique, which is possible only when $C=C'$).
For every $v\in V(G)$, we have
$\alpha(v)\cap X=\set*{x_C}{C\ni v\text{ (or $C\in B_v$)}}$. Hence,
$\psi(C)=x_C$ is an isomorphism from~$\bund G$ to~$\calH|_X$ with
the desired property.
\end{proof}

The following classical result provides a link between HCA~graphs and
CA~hypergraphs; it is exemplified in Fig.~\ref{fig:bundle}.

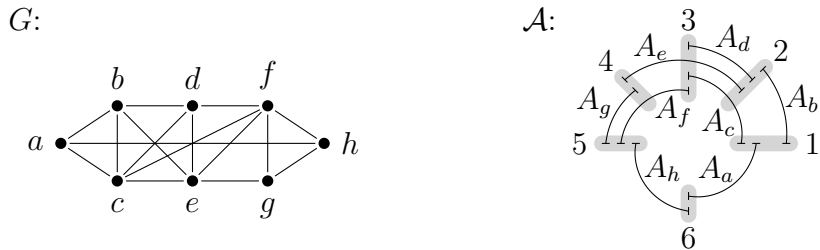
\begin{figure}
  \centering
  \begin{tikzpicture}[baseline=.5cm,
    gv/.style={fill,circle,inner sep=1.5pt},
    ge/.style={draw,shorten <=1pt,shorten >=1pt}]
    \node[left,inner sep=0pt] at (-1,2.15) {$G$:};
    \node[gv,label=left:$a$] (1) at (-.75,.5) {};
    \node[gv,label=above:$b$] (2) at (0,1) {} edge[ge] (1);
    \node[gv,label=below:$c$] (3) at (0,0) {} edge[ge] (1) edge[ge] (2);
    \node[gv,label=above:$d$] (4) at (1,1) {} edge[ge] (2) edge[ge] (3);
    \node[gv,label=below:$e$] (5) at (1,0) {} edge[ge] (2) edge[ge] (3) edge[ge] (4);
    \node[gv,label=above:$f$] (6) at (2,1) {} edge[ge] (3) edge[ge] (4) edge[ge] (5);
    \node[gv,label=below:$g$] (7) at (2,0) {} edge[ge] (5) edge[ge] (6);
    \node[gv,label=right:$h$] (8) at (2.75,.5) {} edge[ge] (6) edge[ge] (7) edge[ge] (1);
  \end{tikzpicture}\hfil
  \begin{tikzpicture}[baseline=0cm]
    \node[left,inner sep=0pt] at (-1.75,1.65) {$\calA$:};
    \begin{scope}[rotate=-45]
      \foreach \p/\x/\i/\o in {1/1/.65cm/1.35cm,2/2/.75cm/1.44cm,3/3/.65cm/1.35cm,
        4/4/.75cm/1.25cm,5/5/.65cm/1.15cm,7/6/.65cm/.95cm} {
        \draw[line width=6pt,line cap=round,color=gray!33!white] (\p*45:\i) --
        (\p*45:\o);
        \node at (\p*45:\o+.3cm) {$\x$};
      }
    \end{scope}
    \begin{camodel}[cabase=.7cm,caanglestep=45,rotate=-45,
      every label/.style={inner sep=0pt}]
      \CArc{start=-1,end=1,startlevel=0,endlevel=1,label=$A_a$,swap,labelpos=.55}
      \CArc{start=1,end=2,startlevel=3,endlevel=3.5,label=$A_b$,labelpos=.4}
      \CArc{start=1,end=3,startlevel=0,endlevel=1,label=$A_c$,swap,labelpos=.4}
      \CArc{start=2,end=3,startlevel=2.5,endlevel=3,label=$A_d$,labelpos=.65}
      \CArc{start=2,end=4,startlevel=1.5,endlevel=2.5,label=$A_e$,labelpos=.65}
      \CArc{start=3,end=5,startlevel=0,endlevel=1,label=$A_f$,swap,labelpos=.4}
      \CArc{start=4,end=5,startlevel=1.5,endlevel=2,label=$A_g$,labelpos=.6}
      \CArc{start=5,end=7,startlevel=0,endlevel=1,label=$A_h$,swap,labelpos=.45}
    \end{camodel}
  \end{tikzpicture}
  \caption{The graph~$G$ contains the maxcliques $C_1=\{a,b,c\}$,
    $C_2=\{b,c,d,e\}$, $C_3=\{c,d,e,f\}$, $C_4=\{e,f,g\}$, $C_5=\{f,g,h\}$, and
    $C_6=\{a,h\}$. Its bundle hypergraph~$\calB_G$ admits the HCA~model~$\calA$
    via the representation~$\rho\colon\calC(G)\to\{1,2,3,4,5,6\}$ that maps each
    maxclique~$C_i$ to the point~$i$, and thus $\rho(B_v)=A_v$ for each $v\in
    V(G)$. The function $\alpha\colon V(G)\to\calA$ that maps each vertex~$v$ to
    the arc~$A_v$ is an HCA~representation of~$G$.}\label{fig:bundle}
\end{figure}

\begin{lemma}[Gavril~\cite{Gavril74}]\label{lem:Gavril}
$G$ is an HCA~graph iff $\bund G$ is a CA~hypergraph.
\end{lemma}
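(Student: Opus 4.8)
The plan is to read off both implications from Lemmas~\ref{lem:GtoBG}--\ref{lem:BGisHellyMin}; beyond these, only two elementary facts are needed, namely that the Helly property is invariant under hypergraph isomorphism, and that restricting an arc system to a subset of its vertices yields, after relabeling, an arc system.

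\emph{The ``if'' direction.} Suppose $\bund G$ is a CA~hypergraph and fix an arc representation $\rho$ of $\bund G$, that is, an isomorphism from $\bund G$ onto an arc system~$\calA$. By Lemma~\ref{lem:BGisHelly} the hypergraph $\bund G$ has the Helly property, and since $\rho$ carries pairwise intersecting hyperedges to pairwise intersecting hyperedges and common vertices to common vertices, $\calA$ has the Helly property as well. Moreover, $\rho$ preserves emptiness and non-emptiness of intersections of hyperedges, so it induces an isomorphism between the intersection graphs of $\bund G$ and of $\calA$; composing this with the intersection representation $\beta_G$ of~$G$ from Lemma~\ref{lem:GtoBG} yields the intersection representation $v\mapsto\rho(B_v)$ of~$G$ with arc model~$\calA$, which also sends twins of $G$ to the same (multi-)arc. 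As $\calA$ obeys the Helly property, this is a Helly arc representation, so $G$ is an HCA~graph.

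\emph{The ``only if'' direction.} Suppose $G$ is an HCA~graph and fix a Helly arc representation $\alpha\function{V(G)}{\calA}$, where $\calA$ is an arc system with the Helly property on the discrete circle $V(\calA)=\{1,\dots,n\}$. By Lemma~\ref{lem:BGisHellyMin} there is a vertex subset $X\subseteq\{1,\dots,n\}$ and a hypergraph isomorphism from $\bund G$ to $\calA|_X$, so it is enough to verify that $\calA|_X$ is a CA~hypergraph. Equip $X$ with the cyclic order inherited from $\{1,\dots,n\}$ and let $\iota$ be the order-preserving bijection of $X$ onto $\{1,\dots,\card{X}\}$. Every arc $A\in\calA$ is a block of consecutive points of the circle, so $A\cap X$ is a block of consecutive points of $X$, and hence $\iota(A\cap X)$ is an arc of the circle $\{1,\dots,\card{X}\}$; thus $\iota$ is an arc representation of $\calA|_X$, showing that $\calA|_X$, and therefore $\bund G\cong\calA|_X$, is a CA~hypergraph. (If $\card{X}<3$, equivalently $G$ has fewer than three maxcliques, one instead checks directly that $\bund G$ --- a hypergraph on at most two vertices --- is CA, e.g.\ by padding the circle with dummy points covered by no hyperedge.)

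I do not expect a real obstacle: the statement follows by a short deduction from the preparatory lemmas. The only points requiring care are bookkeeping ones --- tracking hyperedge multiplicities, which arise precisely when $G$ has twins and are explicitly allowed by the representation framework of this paper, and disposing of the degenerate cases with very few maxcliques --- together with the routine check that the restriction of an arc system to a vertex subset is again an arc system in the induced cyclic order.
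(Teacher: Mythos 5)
Your proposal is correct and follows essentially the same route as the paper: the ``if'' direction composes an arc representation $\rho$ of $\bund G$ with the Helly intersection representation $\beta_G$ (Lemmas~\ref{lem:GtoBG} and~\ref{lem:BGisHelly}), and the ``only if'' direction applies Lemma~\ref{lem:BGisHellyMin} to identify $\bund G$ with $\calA|_X$ and observes that a restriction of an arc system is again CA. You merely spell out details the paper leaves implicit (invariance of the Helly property under isomorphism, the induced cyclic order on $X$, and the degenerate case $\card{X}<3$), which is fine.
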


\begin{proof}
  If $\bund G$~is a CA~hypergraph, consider an arc representation~$\rho$ of~$\bund G$.
By Lemmas~\ref{lem:GtoBG} and~\ref{lem:BGisHelly}, $\beta_G$~is a Helly 
intersection representation of~$G$. It remains to notice that $\rho\circ\beta_G$ is 
a Helly arc representation of this graph.

Conversely, assume that $G$~is an HCA~graph and consider a Helly arc representation
$\alpha\function{V(G)}{\calA}$ of~$G$. By Lemma~\ref{lem:BGisHellyMin},
$\bund G$~is isomorphic to the hypergraph~$\calA|_X$ for some set of points~$X$.
For any arc system~$\calA$ and for any set of points 
$X\subseteq V(\calA)$, the hypergraph~$\calA|_X$ is~CA.
Therefore, $\bund G\cong\calA|_X$ is~CA as well.
\end{proof}

The last lemma of the section describes local similarity between the bundle hypergraph~$\bund G$ and 
the \emph{closed neighborhood hypergraph}~$\Set{N[v]}_{v\in V(G)}$.
It shows that the set-theoretic relations between maxclique bundles can be
understood in terms of the adjacency relation of the graph.

\begin{lemma}\label{lem:BvsN}
  Let $u$ and $v$ be arbitrary vertices of a graph~$G$.
  \begin{bfenumerate}
  \item 
$B_u\cap B_v\ne\emptyset$ iff $u\in N[v]$ and iff $v\in N[u]$.
\item 
$B_u\subseteq B_v$ iff $N[u]\subseteq N[v]$.
\item 
Suppose that $u$ and $v$ are adjacent. Then
$B_u\cup B_v=\cliq G$ iff the following three conditions are met:
\begin{enumerate}[ref=(\alph*)]
\item\label{cond:cover}
$N[u]\cup N[v]=V(G)$;
\item 
$w\in N[u]\setminus N[v]$ implies $N[w]\subseteq N[u]$;
\item\label{cond:non-contained}
$w\in N[v]\setminus N[u]$ implies $N[w]\subseteq N[v]$.
\end{enumerate}
\item 
Suppose that $u$ and $v$ are adjacent. Then
$B_u\cap B_v\subseteq B_w$ iff $N[u]\cap N[v]\subseteq N[w]$.
  \end{bfenumerate}
\end{lemma}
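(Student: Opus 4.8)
The plan is to reduce all four statements to a single elementary fact about maxcliques, which I would establish first: \emph{for every maxclique $C$ and every vertex $w$, we have $w\in C$ if and only if $C\subseteq N[w]$} (equivalently, $C\in B_w$ iff $C\subseteq N[w]$). Indeed, if $w\in C$ then every vertex of the clique~$C$ lies in~$N[w]$; conversely, if $C\subseteq N[w]$ then $C\cup\{w\}$ is a clique, and maximality of~$C$ forces $w\in C$. Alongside this I would use two trivial auxiliary facts: every clique extends to a maxclique, and $C\subseteq N[x]$ whenever $x$ belongs to a clique~$C$.

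With this in hand, items~1, 2, and~4 are short. Item~1 is just~\refeq{BuBv} together with the remark that, for distinct $u,v$, the conditions $u\in N[v]$ and $v\in N[u]$ both say exactly that $u$ and~$v$ are adjacent, while the case $u=v$ is trivial since $B_u\neq\emptyset$. For item~2: if $N[u]\subseteq N[v]$ and $C\in B_u$, then $C\subseteq N[u]\subseteq N[v]$, so $C\in B_v$; conversely, given $w\in N[u]$, extend $\{u,w\}$ to a maxclique $C\in B_u\subseteq B_v$, and from $w\in C\subseteq N[v]$ conclude $w\in N[v]$. Item~4 follows the same pattern, now using that $\{u,v,x\}$ is a clique whenever $x\in N[u]\cap N[v]$ (this is where the hypothesis $u\sim v$ enters): if $N[u]\cap N[v]\subseteq N[w]$ and $C\in B_u\cap B_v$, then $C\subseteq N[u]\cap N[v]\subseteq N[w]$, so $C\in B_w$; conversely, for $x\in N[u]\cap N[v]$ extend $\{u,v,x\}$ to a maxclique $C\in B_u\cap B_v\subseteq B_w$ and read off $x\in C\subseteq N[w]$.

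Item~3 is the substantive part. I would first note that $B_u\cup B_v=\cliq G$ literally means ``every maxclique contains $u$ or $v$''. For the direction from~(a)--(c): let $C$ be a maxclique avoiding~$u$; then $C\not\subseteq N[u]$, so pick $x\in C$ with $x\notin N[u]$; condition~(a) gives $x\in N[v]$, hence $x\in N[v]\setminus N[u]$ (note $x\neq v$ automatically, since $x\notin N[u]$ while $v\in N[u]$); condition~(c) gives $N[x]\subseteq N[v]$, and since $x\in C$ we get $C\subseteq N[x]\subseteq N[v]$, whence $v\in C$ by the key observation. So every maxclique contains $u$ or~$v$. For the converse, assume every maxclique meets $\{u,v\}$. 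To get~(a), extend an arbitrary singleton $\{w\}$ to a maxclique~$C$; it contains $u$ or $v$, and either way $w\in C\subseteq N[u]$ or $w\in C\subseteq N[v]$. To get~(b), take $w\in N[u]\setminus N[v]$ and $x\in N[w]$; since $w\notin N[v]$ we have $w\neq v$ and $w\not\sim v$, so a maxclique $C$ extending the clique $\{w,x\}$ cannot contain~$v$ and therefore contains~$u$, giving $x\in C\subseteq N[u]$; condition~(c) follows by the symmetric argument.

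I expect the only real care to be in item~3: tracking the small degenerate cases (that $w\neq v$ and $w\not\sim v$ in~(b), and that $x\notin N[u]$ forces $x\neq v$ because $u\sim v$) and applying the maximality step of the key observation correctly at each occurrence. Everything else is mechanical once that observation is isolated. One can even check that, given $u\sim v$, condition~(a) together with~(c) already implies~(b) (and symmetrically), so the three conditions are mutually equivalent; but the argument above does not depend on this.
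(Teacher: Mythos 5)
Your proof is correct. It takes a recognizably different, and arguably cleaner, route than the paper's: you isolate the single criterion ``$w\in C$ iff $C\subseteq N[w]$ for a maxclique $C$'' and run every part through it, whereas the paper proves the easy ($\Longrightarrow$) directions by reducing to Part~1 (and, for Part~4, by invoking the Helly property of $\bund G$ from Lemma~\ref{lem:BGisHelly}), and proves the hard direction of Part~3 by contraposition, exhibiting an induced $4$-cycle $vuxy$ inside a maxclique missed by $B_u\cup B_v$ and using it to refute condition~(b). Your Part~4 ($\Longrightarrow$) replaces the Helly argument by simply extending the clique $\{u,v,x\}$ to a maxclique, and your Part~3 ($\Longleftarrow$) is a direct argument that uses only conditions~(a) and~(c) --- consistent with your closing observation that, together with the forward direction, this shows (a) and (c) already imply (b). What the paper's route buys is modularity (later parts lean on earlier ones and on Lemma~\ref{lem:BGisHelly}, which is needed elsewhere anyway) and the explicit $C_4$ obstruction, which is suggestive of the circular-arc picture; what yours buys is a shorter, self-contained argument with one reusable observation and careful tracking of the degenerate cases ($x\ne v$, $w\ne v$, $w\not\sim v$) that the paper glosses over. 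Both are complete.
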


\begin{proof}
{\bf 1} readily follows from~\refeq{BuBv}. 

{\bf 2.}
$(\Longrightarrow)$
In this direction, the claim readily follows from Part~1. Indeed, if $x\in N[u]$, 
then $B_x$~intersects~$B_u$ and, hence, also~$B_v$. Therefore, $x\in N[v]$.

$(\Longleftarrow)$
Suppose that $C\in B_u$, that is, $u\in C$. It follows that $C\subseteq N[u]$
and, by assumption, also $C\subseteq N[v]$. This implies that $C\cup\{v\}$
is a clique. Since the clique~$C$ is maximal, $v\in C$, that is, $C\in B_v$.

{\bf 3.}
$(\Longrightarrow)$
Again, this direction follows from Part~1, even without the assumption that
$u$~and~$v$ are adjacent. 

{\bf (a)}
For any~$x$, the bundle~$B_x$ intersects at least one of the bundles $B_u$~and~$B_v$.
Therefore, $x$~belongs to one of the neighborhoods $N[u]$~or~$N[v]$.

{\bf (b)}
Assume that $w\in N[u]\setminus N[v]$. This implies, in particular, that
$B_w$~is disjoint from~$B_v$. If follows from $B_u\cup B_v=\cliq G$
that $B_w\subseteq B_u$. 
By part 2, we conclude that $N[w]\subseteq N[v]$.

{\bf (c)} is symmetric to~(b).

$(\Longleftarrow)$
For this direction, the assumption that $u$~and~$v$ are adjacent is essential.
Assuming that $B_u\cup B_v\ne\cliq G$, we will infer that at least one of
the conditions (a)~and~(b) is false. Indeed, let~$C$ be a maxclique that does
not belong to $B_u\cup B_v$, that is, $u\notin C$ and $v\notin C$.
Since $C$~is inclusion-maximal, it contains a vertex~$x$ non-adjacent to~$v$
and a vertex~$y$ non-adjacent to~$u$. Suppose that (a)~is true. Then
$x$~must be adjacent to~$u$ and, similarly, $y$~must be adjacent to~$v$.
Thus, $vuxy$~is an induced cycle of length~4 in~$G$. Now, (b)~is refuted
by taking $w=x$ because $x\in N[u]\setminus N[v]$ while $y\in N[x]\setminus N[u]$.

{\bf 4.}
$(\Longrightarrow)$
Once again, this direction follows from Part~1.
Indeed, let $x\in N[u]\cap N[v]$. It follows that
$B_x$~intersects both $B_u$~and~$B_v$. 
Since $B_u$~and~$B_v$ intersect (because $u$~and~$v$ are adjacent), 
Lemma~\ref{lem:BGisHelly} implies that $B_x$~intersects even
the intersection $B_u\cap B_v$. Since $B_u\cap B_v\subseteq B_w$,
$B_x$~intersects also~$B_w$ and, therefore, $x\in N[w]$.

$(\Longleftarrow)$
For this direction, the assumption that $u$~and~$v$ are adjacent is not needed
(as then $B_u\cap B_v=\emptyset$). 
Let $C\in B_u\cap B_v$, that is, $u\in C$ and $v\in C$.
It follows that the clique~$C$ is contained in both $N[u]$~and~$N[v]$.
Since $C\subseteq N[u]\cap N[v]\subseteq N[w]$, the set $C\cup\{w\}$ is a clique.
Since $C$ is inclusion-maximal, $w\in C$ and $C\in B_w$ as well.
\end{proof}

\section{Getting canonicity for free}\label{s:reducing}

\begin{lemma}\label{lem:reducing}
  The canonical representation problem for HCA~graphs
is logspace reducible to the (not necessarily canonical)
representation problem for HCA~graphs with no twins and no
universal vertices.
\end{lemma}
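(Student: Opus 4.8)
The plan is to reduce in two independent stages: first eliminating universal vertices, then eliminating twins, while arranging that each stage preserves isomorphism types in a way that lets us lift a representation of the reduced graph back to a canonical representation of the original. Throughout, recall from Lemma~\ref{lem:Gavril} and the discussion following the statement of Theorem~\ref{thm:main} that, once we possess \emph{any} HCA~model~$\calA$ of the input graph~$G$, we can locate all maxcliques of~$G$ by inspecting the sets of arcs through each point, form the bundle hypergraph~$\bund G$, and apply the logspace canonical representation scheme for CA~hypergraphs of~\cite{KoeblerKV12} to~$\bund G$; this yields a \emph{canonical} HCA~representation of~$G$, because $\bund G$ is defined isomorphism-invariantly from~$G$ and $\beta_G$ (Lemma~\ref{lem:GtoBG}) is an intersection representation of~$G$. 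So it suffices to show how to obtain \emph{some} HCA~model of~$G$ in logspace given an oracle for the representation problem on twin-free, universal-vertex-free HCA~graphs.

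First I would handle universal vertices. Let $U$ be the set of universal vertices of~$G$ (computable in logspace) and let $G_1 = G - U$. Then $G_1$ is again HCA (induced subgraphs of HCA~graphs are HCA, since arcs restrict to arcs), and any arc model~$\calA_1$ of~$G_1$ extends to an arc model of~$G$ by assigning the complete arc to every vertex of~$U$; the Helly property is preserved since the complete arc meets every arc and adding it to a pairwise-intersecting family keeps the overall intersection nonempty (it equals the old overall intersection). Conversely, if $G$ was already twin-free after removing universals, we are done; in general $G_1$ may still have twins, so we pass to the second stage.

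Second, I would handle twins. Let $G_2 = G_1'$ be the quotient graph of $G_1$ by the twin relation, with quotient map $v \mapsto [v]$. The graph $G_2$ has no twins by construction; it has no universal vertices because a universal vertex of $G_2$ would pull back to a universal vertex of $G_1$ (a vertex adjacent to every vertex outside its twin class, hence to all of $V(G_1)$, as twins are mutually adjacent). Crucially, $G_2$ is HCA: given an HCA~model of $G_1$, the arcs assigned to twins may be taken equal (this is exactly the convention fixed in the ``Arc representations of a graph'' paragraph — twins map to the same multi-arc), so the model descends to an arc model of $G_2$; conversely, given an arc model $\calA_2$ of $G_2$, assign to each $v \in V(G_1)$ the arc $\calA_2([v])$ — this is an HCA~model of $G_1$ because adjacency in $G_1$ is determined by twin classes (between two distinct twin classes there are either all or no edges, and within a class all edges), and the Helly property descends since a pairwise-intersecting family of arcs-with-multiplicity in $G_1$ maps to a pairwise-intersecting family in $\calA_2$ with a common point. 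Both the quotient construction and the lifting are clearly logspace-computable (identifying twin classes amounts to comparing closed neighborhoods).

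Chaining the two stages: from $G$ we compute in logspace the twin-free, universal-vertex-free HCA~graph $G_2 = (G - U)'$, call the oracle to obtain an HCA~model $\calA_2$ of $G_2$, lift it along the quotient map to an HCA~model of $G_1 = G - U$, then extend by complete arcs on $U$ to get an HCA~model $\calA$ of $G$. Finally, feed $G$ together with $\calA$ into the canonicity machinery described in the first paragraph (find all maxcliques via $\calA$, build $\bund G$, apply the CA-hypergraph canonical scheme of~\cite{KoeblerKV12}) to output a canonical HCA~representation of $G$. The main obstacle I anticipate is not in any single step but in verifying carefully that the lifting operations preserve the \emph{Helly} property in both directions — in particular that passing to and from the quotient does not create a pairwise-intersecting arc family whose common intersection is empty — and in checking that composing two logspace reductions with an oracle call in the middle stays within logspace, which follows from the standard composition lemma for logspace reductions but deserves an explicit remark.
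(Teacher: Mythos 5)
Your proposal is correct and follows essentially the same route as the paper: reduce canonical representation to finding all maxcliques (via $\bund G$ and the canonical CA-hypergraph scheme of~\cite{KoeblerKV12}), recover the maxcliques from an arbitrary Helly model, and pass to the twin-free, universal-vertex-free quotient. The only immaterial difference is that you lift the arc model from the quotient back to~$G$ before extracting maxcliques, whereas the paper extracts the maxcliques of the quotient and transfers them combinatorially back to~$G$.
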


\begin{proof}
 We first show that the canonical representation problem for HCA~graphs
reduces in logspace to the problem of computing~$\cliq G$, that is,
to finding all maxcliques in a given HCA~graph~$G$.
Indeed, given~$\cliq G$, we can easily construct the bundle hypergraph~$\bund G$
and the mapping~$\beta_G$. As shown in the proof of Lemma~\ref{lem:Gavril},
we can combine~$\beta_G$ with an arc representation~$\rho_{\bund G}$ of the
CA~hypergraph~$\bund G$ and obtain an arc representation
$\alpha_G=\rho_{\bund G}\circ\beta_G$.
If $\rho_{\bund G}$ is chosen according to the logspace-computable canonical representation
scheme for CA~hypergraphs designed in~\cite{KoeblerKV12}, then $G\mapsto\alpha_G$ will be
a canonical representation scheme for HCA~graphs. Indeed, if $G\cong H$, then
$\bund G\cong\bund H$, which implies that $\alpha_G(G)=\rho_{\bund G}(\bund G)$
is equal to $\alpha_H(H)=\rho_{\bund H}(\bund H)$.

Note now that the problem of computing~$\cliq G$ is equivalent to
its restriction to graphs with no twins and no universal vertices.
Indeed, let~$G'$ be obtained from~$G$ by computing its quotient-graph
with respect to the twin-relation and removing the universal vertex~$[u]$ from it
(if $G$~contains a universal vertex~$u$). Given~$\cliq{G'}$, we easily
obtain~$\cliq G$ by inserting~$[u]$ in each maxclique of~$G'$
and by converting each maxclique $\{[v_1],\ldots,[v_k]\}$ of the quotient-graph
to the maxclique $[v_1]\cup\ldots\cup[v_k]$ of the original graph~$G$.

It remains to show that finding~$\cliq G$ reduces to computing
an arbitrary Helly arc representation~$\alpha$ of~$G$.
Given the arc model~$\alpha(G)$, for each point~$x$ of the circle
we can compute the set $C_x=\set*{v\in V(G)}{x\in\alpha(v)}$.
Obviously, $C_x$~is a clique in~$G$. By Lemma~\ref{lem:BGisHellyMin},
among these cliques there are all maxcliques of~$G$.
Since maximality of a given clique is easy to detect,
this allows us to compute all~$\cliq G$.
\end{proof}

\section{A sharpening of a minimal HCA model}\label{s:sharpening}

If two sets $A$~and~$B$ intersect but neither of them includes the other,
we say that they \emph{overlap} and write $A\between B$.
Suppose now that $A$~and~$B$ are arcs on a circle~$\circl$.
If $A\between B$ and $A\cup B\ne\circl$, we say that $A$~and~$B$ \emph{strictly} overlap
and write $A\between^* B$. If, moreover, $A=[a^-,a^+]$ and $a^+\in B$,
we say that $A$~overlaps~$B$ \emph{on the left} (or that $B$~overlaps~$A$ \emph{on the right})
and write $A\preccurlyeq^* B$ in this case.

A system~$\calA$ of $m$~arcs on the $2m$-point circle will be
called \emph{sharp} if all extreme points of the arcs in~$\calA$
are pairwise distinct; in other words, every point of the circle 
is either start or end point of exactly one arc.
Furthermore, let $A,B\in\calA$, $A=[a^-,a^+]$, and $B=[b^-,b^+]$.
If the extreme points of these arcs occur in the circular order~$a^-b^+b^-a^+$,
we say that $A$~and~$B$ form a \emph{circle cover} and write $A\tied B$.
Note that $A\tied B$ exactly when $A\cup B=\circl$ and $A$~contains both
$b^-$~and~$b^+$ (hence, $B$~contains both $a^-$~and~$a^+$).

\begin{definition}\label{def:sharp}\rm
  Let~$\calA$ be an arc system on a circle~$\circl$
with no multi-arcs and no complete arc. Let~$\calA'$ be
another arc systems on a circle~$\circl'$. A bijection $\sigma\function{\calA}{\calA'}$
is a \emph{sharpening} of~$\calA$ if $\calA'$~is sharp and the following
conditions are met for every $A,B\in\calA$:
\begin{enumerate}
\item\label{cond:int}
$A\cap B=\emptyset$ iff $\sigma(A)\cap\sigma(B)=\emptyset$;
\item\label{cond:sub} 
$A\subseteq B$ iff $\sigma(A)\subseteq\sigma(B)$;
\item\label{cond:ov} 
$A\preccurlyeq^* B$ iff $\sigma(A)\preccurlyeq^*\sigma(B)$;
\item\label{cond:cc} 
Let $A,B\ne\circl$ and $A\cap B\ne\emptyset$. Then
$A\cup B=\circl$ iff $\sigma(A)\tied\sigma(B)$.
\end{enumerate}
\end{definition}
Condition~\ref{cond:cc} means that if $A\cup B=\circl$ and $A$~contains
one extreme point of~$B$, then $\sigma(A)$~must contain both extreme points of~$\sigma(B)$.

\begin{lemma}\label{lem:sharp}
Let $G$ be a HCA graph without twins and universal vertices.
Let $\calA$ be an arc model of $\bund G$.
Then $\calA$ can be sharpened to an arc system~$\calA'$ satisfying the Helly property.
%
%
\end{lemma}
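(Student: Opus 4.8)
The plan is to sharpen $\calA$ by a local surgery at each point of the circle, blowing it up into two consecutive points so as to separate the unique arc starting there from the unique arc ending there, and then verify that the resulting system $\calA'$ is sharp, that it satisfies conditions \ref{cond:int}--\ref{cond:cc}, and that it has the Helly property. First I would set up the construction. Since $G$ has no twins, $\bund G$ has no multi-hyperedges, so $\calA$ has no multi-arcs; since $G$ has no universal vertex, $\calA$ has no complete arc (a universal vertex would map to a hyperedge equal to all of $V(\bund G)=\cliq G$, which as an arc is complete). Let $\circl$ be the $n$-point circle underlying $\calA$, where $n=|\cliq G|$. At each point $p$ of $\circl$, let $S(p)$ be the set of arcs of $\calA$ with start point $p$ and $E(p)$ the set with end point $p$. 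Replace $p$ by two consecutive points $p^-$ and $p^+$ (with $p^-$ immediately before $p^+$), obtaining a circle $\circl'$ with $2n$ points; for each arc $A=[a,b]$ of $\calA$, let $\sigma(A)$ be the arc of $\circl'$ with start point $a^-$ and end point $b^+$ (arcs through $p$ but not touching it as an endpoint simply cover both $p^-$ and $p^+$). This makes every point of $\circl'$ the start point of exactly one arc or the end point of exactly one arc but not both, provided the order within $S(p)$ and $E(p)$ is irrelevant — which it is, since $|\calA|=n$ and each arc contributes one start and one end, so $\sum_p|S(p)|=\sum_p|E(p)|=n$, and we want exactly $n$ start points and $n$ end points among $2n$ points, so each $p^-$ must be a single start point and each $p^+$ a single end point: this forces $|S(p)|=|E(p)|=1$ for all $p$. \emph{This is a problem: $\calA$ need not have this property.} So the first real step is to preprocess $\calA$ so that every point is the start of exactly one arc and the end of exactly one arc.

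To achieve that preprocessing, I would argue we may assume WLOG that $\calA$ is the canonical arc model $\rho_{\bund G}(\bund G)$ from \cite{KoeblerKV12}, or more cheaply, I would first split each point $p$ of $\circl$ into $\max(|S(p)|,|E(p)|,1)$ consecutive copies, distributing the starting arcs and ending arcs among them so that every copy starts exactly one arc or ends exactly one arc; an arc that neither starts nor ends at $p$ is stretched to cover all copies of $p$. One must check this preliminary refinement preserves all of $\between^*$, $\subseteq$, disjointness, and the "covers $\circl$" relation between pairs — this is routine because we only ever insert new points strictly between old ones, never changing which arc contains which original point, and we can always order the copies of $p$ so that if $A$ ends at $p$ and $B$ starts at $p$ with $A\cap B\ne\emptyset$ in $\calA$ witnessed only at $p$, then the copy ending $A$ comes weakly after the copy starting $B$ iff they should still intersect. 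Here a subtle point: whether $A$ ending at $p$ and $B$ starting at $p$ should have $\sigma(A)\cap\sigma(B)\ne\emptyset$ is dictated by condition \ref{cond:int} (they must stay intersecting if they intersected, i.e. always, since sharing endpoint $p$ means intersecting) \emph{unless} forcing this would conflict with sharpness; but sharpness only demands distinct extreme points, and $\sigma(A)$ ending at one point while $\sigma(B)$ starts at the same point is allowed — wait, no, sharpness demands each point be start-or-end of exactly one arc. So $A$ and $B$ sharing endpoint $p$ in $\calA$ cannot share any endpoint in $\calA'$; they can still intersect if one of them extends past that point. Since $A=[a,p]$ and $B=[p,b]$ both contain $p$, in $\calA$ we have $A\cap B\ni p$, hence by \ref{cond:int} we need $\sigma(A)\cap\sigma(B)\ne\emptyset$; this is arranged by placing the copy of $p$ ending $A$ strictly after the copy starting $B$, so $\sigma(B)=[p_i,b^+]$ and $\sigma(A)=[a^-,p_j]$ with $i<j$, giving overlap on the copies $p_i,\dots,p_j$. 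This consistency across all pairs at $p$ needs a short argument that the pairwise intersection pattern at $p$ is realizable by a linear order of the copies — it is, because all these arcs pairwise intersect (they all contain $p$) so by Helly they have a common point, and in fact the structure is an interval order on the "germs at $p$", which always embeds in a line.

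Once $\calA$ has been refined so every point is the start of exactly one arc and the end of exactly one arc, the main construction (split each $p$ into $p^-,p^+$) yields a sharp $\calA'$ and conditions \ref{cond:int}--\ref{cond:cc} are verified one by one: \ref{cond:sub} and \ref{cond:int} because containment of original points is unchanged and the only new points $p^-,p^+$ are covered by $\sigma(A)$ in a way parallel to how $p$ is covered by $A$ (with the endpoint conventions making $a^-\in\sigma(A)$, $a^+\notin\sigma(A)$ when $A=[a,\cdot]$, etc.); \ref{cond:ov} because the circular order of extreme points is preserved; and \ref{cond:cc} because if $A\between B$ with $A\cup B=\circl$ in $\calA$, then after splitting, $\sigma(A)$ and $\sigma(B)$ still cover $\circl'$ and the endpoint convention forces each to contain both extreme points of the other, i.e. $\sigma(A)\tied\sigma(B)$ — conversely a circle cover in $\calA'$ pulls back to $A\cup B=\circl$. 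The final and most substantial step is the Helly property for $\calA'$: I would show that a family $\calF'\subseteq\calA'$ with pairwise nonempty intersections pulls back to a family $\calF\subseteq\calA$ with pairwise nonempty intersections (immediate from \ref{cond:int}), hence by the Helly property of $\calA$ (which holds since $\calA$ is an arc model of $\bund G$ and $\bund G$ is Helly by Lemma~\ref{lem:BGisHelly}) there is a common point $q\in\bigcap\calF$; then I argue that one of the two copies $q^-$ or $q^+$ lies in every $\sigma(A)$ for $A\in\calF$. \textbf{I expect this last point to be the main obstacle}, because a priori different arcs $A\in\calF$ might "use up" $q$ at its start ($q^-$ only) or at its end ($q^+$ only), and one must rule out that some $A$ needs $q^-$ and another $B$ needs $q^+$ while neither contains the other copy. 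To handle it, note that if $A$ contains $q$ but not $q^+$ then $A$ ends at $q$, i.e. $A=[\cdot,q]$, and if $B$ contains $q$ but not $q^-$ then $B$ starts at $q$, $B=[q,\cdot]$; but then in $\calA$, $A\cap B$ is witnessed at $q$ and we have $A\preccurlyeq^* B$ or $A,B$ in some containment — in any case, by condition \ref{cond:int}, $\sigma(A)\cap\sigma(B)\ne\emptyset$, and since $\sigma(A)$ ends at $q^-$ or earlier and $\sigma(B)$ starts at $q^+$ or later... here the refinement step pays off: after refinement $A$ and $B$ do not share the point $q$, so one of them, say $B$, actually starts \emph{strictly after} where $A$ ends, and their intersection in $\calA'$ occurs on the far side of the circle, forcing $A\cup B=\circl$ and hence (by the construction) $q^+\in\sigma(A)$ after all, a contradiction. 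Thus at most one "side" of $q$ is critical, and the non-critical copy of $q$ lies in $\bigcap_{A\in\calF}\sigma(A)$, giving Helly for $\calA'$. I would then remark that, a fortiori, the graph $\bbI{\calA'}$ equals $\bbI{\calA}$ (from \ref{cond:int}), so $\calA'$ is again an arc model of $\bund G$, completing the proof.
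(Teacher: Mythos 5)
Your construction has a genuine gap at Condition~\ref{cond:cc} of Definition~\ref{def:sharp}, and it is not a detail that can be patched within your setup. All of your new points are copies of an original point $p$, placed where $p$ was, so every copy of $p$ precedes every copy of its successor $p+1$. Now take two arcs $A=[s_A,x]$ and $B=[x+1,e_B]$ of $\calA$ with $A\cap B\ne\emptyset$; since $A$ ends just before $B$ begins, their intersection lies on the far side of the circle and one checks that $A\cup B=\circl$, so Condition~\ref{cond:cc} demands $\sigma(A)\tied\sigma(B)$, i.e.\ $\sigma(A)$ must contain the \emph{start} point of $\sigma(B)$. But in your construction $\sigma(A)$ ends at a copy of $x$ while $\sigma(B)$ starts at a copy of $x+1$, which comes strictly later; so $\sigma(A)$ misses $\sigma(B)$'s start point and the pair is not tied. (Such pairs do occur; Lemma~\ref{lem:BvsN}.3 describes exactly when.) The paper avoids this by placing the new extreme points \emph{in the gap between} $x$ and $x+1$ and interlacing the new end points of arcs ending at $x$ with the new start points of arcs starting at $x+1$, putting $a^+$ after $b^-$ precisely when the two arcs intersect; your surgery never moves an extreme point across the boundary between two consecutive original points, so it cannot create the required circle covers. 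This is not cosmetic: Condition~\ref{cond:cc} is what Lemmas~\ref{lem:a}, \ref{lem:b} and~\ref{lem:c} rely on later, and your own Helly argument secretly uses it --- the step ``forcing $A\cup B=\circl$ and hence by the construction $q^+\in\sigma(A)$'' is exactly the tied property your construction fails to deliver.

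Two smaller points. First, for sharpness you must delete the points that are extreme for no arc, but you never discuss them; the paper disposes of them by observing (via Lemma~\ref{lem:BGisHellyMin}) that a minimal model has none. Second, your one-line verification of Condition~\ref{cond:ov} (``the circular order of extreme points is preserved'') is not sufficient once interior points are removed: one must also exhibit a surviving point strictly between $b^+$ and $a^-$, or a strictly overlapping pair degenerates into a circle cover. The paper needs the maxclique structure of the model for this (an arc disjoint from $A$ whose end point lies in that region), and some argument of this kind is unavoidable in any correct proof.
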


\begin{proof}
Note that every point in $\calA$ is an extreme point of some arc
(otherwise removal of a non-extreme point would not change the intersection
graph of $\calA$, nor violate the Helly property, 
whereas Lemma \ref{lem:BGisHellyMin} implies that $\calA$ is a Helly
intersection model of $G$ with the minimum possible number of points).

First of all, we have to make $\calA$ sharp. To this end,
for each pair of successive points $x$~and~$y$
on~$\circl$ we do the following. Suppose that $y$~is the successor of~$x$.
Suppose that $x$~serves as the end point for the arcs $A_1,\ldots,A_k$
and $y$~serves as the start point for the arcs $B_1,\ldots,B_l$. W.l.o.g., assume that
$A_i\subset A_{i+1}$ and $B_j\subset B_{j+1}$.
The arcs $B_1,\ldots,B_l$ will get new pairwise distinct start points $b^-_l,\ldots,b^-_1$
that will be inserted between~$x$ and~$y$ in this order
(here, it may be helpful to view $\circl$ as a continuous circle).
The arcs $A_1,\ldots,A_k$ will get new pairwise distinct end points $a^+_1,\ldots,a^+_k$
that will be inserted between~$x$ and~$y$ in this order. 
In addition to making $\circl$ sharp, we also want
to ensure Condition~\ref{cond:cc} in Definition~\ref{def:sharp} for each pair
$A_i,B_j$. For this purpose, the sequences $a^+_1,\ldots,a^+_k$
and $b^-_l,\ldots,b^-_1$ will interlace as follows.
Note that, if $A_i$~intersects~$B_j$, then $A_i$~intersects also the longer arc~$B_{j+1}$.
This suggests that we put~$a^+_i$ after~$b^-_{j_i}$, where $j_i$~is the minimum index~$j$
such that $A_i$~intersects~$B_j$. 

We do so for all pairs $x,y$, one by one along $\circl$.
Finally, remove all original points of $\circl$ (none of them is any longer extreme).
The resulting arc model is sharp. We have to check the Helly property and 
Conditions~\ref{cond:int}--\ref{cond:cc} in Definition~\ref{def:sharp}.
Let us analyze the outcome of performing the described transformation for a
particular pair~$x,y$.
\begin{itemize}
\item 
Disjoint arcs remain disjoint.
\item 
Any set of arcs with nonempty overall intersection still has
nonempty overall intersection (because every arc either stays the same
or becomes longer in one direction, if considered on the continuous circle). Therefore,
\begin{itemize}
\item 
intersecting arcs remain intersecting, and
\item 
the Helly property is preserved.
\end{itemize}
\item 
The inclusion and the circle cover relations between any two arcs are preserved.
\item 
If the extreme points of arcs $A=[a^-,a^+]$ and $B=[b^-,b^+]$ appear on the circle
in the order $b^-,a^-,b^+,a^+$, this is so also after the transformation
with the only exception that $a^+=x$ and $b^-=y$. In the last case,
the modified versions of $A$ and $B$ form a circle cover. 
\end{itemize}
It follows that Conditions~\ref{cond:int}, \ref{cond:sub}, and \ref{cond:cc} in Definition~\ref{def:sharp}
are fulfilled for~$\calA'$. 

Verification of Condition \ref{cond:ov} requires some more care.
Suppose that $A=[a^-,a^+]$ and $B=[b^-,b^+]$ are strictly overlapping
arcs in $\calA$ and $A\preccurlyeq^* B$. We know that 
the order $a^-,b^-,a^+,b^+$ of their extreme points will be preserved in 
the modified arc system $\calA'$.
However, we still have to check that the modified arcs $A'$ and $B'$
strictly overlap, that is,  to exclude the possibility that the extreme points
$b^+$ and $a^-$ become neighboring points on the underlying cycle of~$\calA'$.

Since $A\preccurlyeq^*B$, the arc $[b^+,a^-]$ contains an inner point $c$.
Let $\calC=\setdef{C\in\calA}{c\in C}$. Since $\calC$ represents a maxclique
in $G$, it must contain an arc $C=[c^-,c^+]$ that is disjoint with the arc $A$.
It remains to note that the point $c^+\in[b^+,a^-]$ lies strictly
between $b^+$ and $a^-$ also in~$\calA'$ and, therefore, $A'\preccurlyeq^*B'$ indeed.
\end{proof}

\begin{remark}\label{rem:normal-stable}\rm
Our notion of sharpening is related to the concepts of a stable arc system
introduced in~\cite{JLMSS11} and of a normalized arc representation
of a graph introduced in~\cite{Hsu95}. A key property of a stable arc system is that no
additional circle-cover pair can be introduced by moving an extreme point,
unless the intersection graph is also changed by this modification.
In particular, a stable arc system
cannot contain any pair of arcs $A=[a^-,a^+]$, and $B=[b^-,b^+]$
such that $a^+$~and~$b^-$ are consecutive points of the circle
and $b^+\in A$. Due to Condition~\ref{cond:cc} in Definition~\ref{def:sharp},
the latter is true also for any sharpened arc system.

In a \emph{normalized} arc representation $\alpha\colon V(G)\to\calA$, the resulting
arc system~$\calA$ must be stable, and the containment between arcs must reflect
the containments between neighborhoods, i.e., $\alpha(u)\subseteq\alpha(v)$ if
and only if $N[u]\subseteq N[v]$.
By Lemma~\ref{lem:Gavril}, every HCA~graph~$G$ has an arc representation
$\alpha\function{V(G)}{\calA}$ of the form $\alpha=\rho\circ\beta_G$, where
$\rho$~is an arc representation of the bundle hypergraph~$\bund G$.
Let us modify~$\alpha$ to another Helly arc representation $\alpha'=\sigma\circ\alpha$ of~$G$,
where $\sigma\function{\calA}{\calA'}$ is a sharpening of~$\calA$.
Lemmas~\ref{lem:BvsN} and~\ref{lem:sharp} imply that $\alpha'$~is normalized.
\end{remark}

\section{Flipping in a sharp arc system}\label{s:flipping}

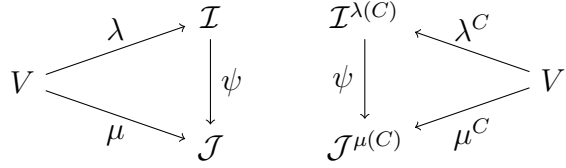
\begin{figure}[t]
  \centering
  \begin{tikzpicture}
    \node (V) at (180:2cm) {$V$};
    \node[minimum width=1.6em] (I) at (60:1cm)  {$\calI$};
    \node[minimum width=1.6em] (J) at (-60:1cm) {$\calJ$};
    \draw[->] (V) -- node[above] {$\lambda$} (I);
    \draw[->] (V) -- node[below] {$\mu$} (J);
    \draw[->] (I) -- node[right] {$\psi$} (J);
  \end{tikzpicture}\qquad
  \begin{tikzpicture}
    \node (V) at (0:2cm) {$V$};
    \node[minimum width=3.2em] (I) at (120:1cm)  {$\calI^{\lambda(C)}$};
    \node[minimum width=3.2em] (J) at (240:1cm) {$\calJ^{\mu(C)}$};
    \draw[->] (V) -- node[above] {$\lambda^C$} (I);
    \draw[->] (V) -- node[below] {$\mu^C$} (J);
    \draw[->] (I) -- node[left] {$\psi$} (J);
  \end{tikzpicture}
  \caption{Lemma \protect\ref{lem:flip}: Flipping preserves isomorphisms that respect extreme points.}\label{fig:flip}
\end{figure}

For the complete arc~$\circl$ the notion of extreme points becomes ambiguous, as then
$\circl=[a,a-1]$ for any $a>1$ and also $\circl=[1,n]$. We will call $[1,n]$ and $[a,a-1]$
a \emph{complete arc with designated extreme points}.
Suppose that an arc $A=[a,b]$ contains more than one point, that is, $a\ne b$.
In this case, we will say that the arc $\tilde A=[b,a]$ is 
obtained from~$A$ by \emph{flipping}. This operation applies, in particular,
to complete arcs with designated extreme points, producing two-point arcs
$[a-1,a]$ or $[n,1]$. If applied to two-point arcs, it produces 
complete arcs with designated extreme points.
Note that flipping preserves sharpness.

Suppose that an arc system~$\calA$ contains no one-point arc
but possibly contains complete arcs with designated extreme points. 
Let $X\subseteq\calA$. 
The \emph{$X$-flipped system} $\calA^X$ is defined
by $\calA^X=\set[\big]{\tilde A}{A\in X}\cup\set[\big]{A}{A\in\calA\setminus X}$.
Given a mapping $\nu\function V\calA$ and a set $C\subseteq V$,
we define the \emph{$C$-flipped mapping} $\nu^C\function V{\calA^{\nu(C)}}$ by
$\nu^C(v)=\widetilde{\nu(v)}$ for $v\in C$ and $\nu^C(v)={\nu(v)}$ for $v\notin C$.

\begin{lemma}\label{lem:flip}
  Let~$\calI$ be an arc system containing no one-point arc but possibly 
complete arcs with designated extreme points.
Let~$\psi$ be a hypergraph isomorphism from~$\calI$ to another arc system~$\calJ$
that takes the extreme points of each arc $A\in\calI$ to the extreme points of the arc $\psi(A)\in\calJ$.
Consider mappings $\lambda\function V\calI$ and $\mu\function V\calJ$
such that $\mu=\psi\circ\lambda$; see Fig.~\ref{fig:flip}.
Let $C\subseteq V$. Then $\psi$~is an isomorphism from~$\calI^{\lambda(C)}$
to~$\calJ^{\mu(C)}$ and $\mu^C=\psi\circ\lambda^C$.
\end{lemma}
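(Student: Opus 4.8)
The plan is to check the two assertions directly from the definitions of flipping, using the hypothesis that $\psi$ respects extreme points. Recall that $\calI^{\lambda(C)}$ is obtained from $\calI$ by replacing each arc $A\in\lambda(C)$ with its flip $\tilde A$, and similarly on the $\calJ$-side; the vertex sets of $\calI$ and $\calI^{\lambda(C)}$ are the same cycle, and likewise for $\calJ$ and $\calJ^{\mu(C)}$. Since $\psi$ is a hypergraph isomorphism from $\calI$ to $\calJ$, it is in particular a bijection between the underlying point sets of the two cycles; we keep this same point bijection and only need to verify that it respects the new hyperedge families. The key structural fact we rely on is: if an arc $A$ (with more than one point) has designated extreme points $a,b$, then $A$ and $\tilde A$ together partition the cycle minus $\{a,b\}$, so knowing the point bijection and the images of the two extreme points determines the image of $\tilde A$ from the image of $A$. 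Concretely, because $\psi$ carries the two extreme points of $A$ to the two extreme points of $\psi(A)$, and $\psi$ is a bijection on all points, we get $\psi(\tilde A)=\widetilde{\psi(A)}$.

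First I would fix this observation as the core lemma: for every $A\in\calI$, $\psi(\tilde A)=\widetilde{\psi(A)}$. This is where the extreme-point hypothesis is essential — without it, $\psi(A)$ could be any arc with the right size and $\psi$ restricted to $A$ could fail to pick out a contiguous arc after flipping. Given the observation, the first assertion follows by a case split on an arbitrary arc of $\calI^{\lambda(C)}$: such an arc is either $\tilde A$ for some $A\in\lambda(C)$, in which case $\psi(\tilde A)=\widetilde{\psi(A)}$ and $\psi(A)\in\mu(C)$ since $\mu(C)=\psi(\lambda(C))$ (using $\mu=\psi\circ\lambda$), so $\widetilde{\psi(A)}\in\calJ^{\mu(C)}$; or it is an unflipped $A\in\calI\setminus\lambda(C)$, in which case $\psi(A)\in\calJ\setminus\mu(C)$, again landing in $\calJ^{\mu(C)}$. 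The reverse inclusion is symmetric using $\psi^{-1}$, which also respects extreme points. Hence $\psi$ maps the hyperedge family of $\calI^{\lambda(C)}$ bijectively onto that of $\calJ^{\mu(C)}$, i.e.\ $\psi$ is an isomorphism $\calI^{\lambda(C)}\to\calJ^{\mu(C)}$; one should also note multiplicities are preserved, since $\psi$ already preserved them on $\calI$ and flipping is applied to the same multiset of arcs on both sides.

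For the second assertion $\mu^C=\psi\circ\lambda^C$, I would simply evaluate both sides at an arbitrary $v\in V$ and split on whether $v\in C$. If $v\notin C$, then $\mu^C(v)=\mu(v)=\psi(\lambda(v))=\psi(\lambda^C(v))$. If $v\in C$, then $\mu^C(v)=\widetilde{\mu(v)}=\widetilde{\psi(\lambda(v))}=\psi\bigl(\widetilde{\lambda(v)}\bigr)=\psi(\lambda^C(v))$, where the third equality is exactly the core observation applied to $A=\lambda(v)$. This finishes the proof.

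The main obstacle is entirely concentrated in the core observation $\psi(\tilde A)=\widetilde{\psi(A)}$, and in being careful about the degenerate cases: $A$ a two-point arc (whose flip is a complete arc with designated extreme points) and $A$ a complete arc with designated extreme points (whose flip is a two-point arc). In these cases ``extreme points'' must be read in the designated sense, and one should check that the hypothesis on $\psi$ — interpreted with designated extreme points — still forces $\psi$ to send the two designated points of $A$ to the two designated points of $\psi(A)$, so that the complement-partition argument goes through unchanged. Once the bookkeeping for these boundary cases is set up correctly, everything else is routine.
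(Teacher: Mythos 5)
Your proof is correct and takes essentially the same route as the paper's: both hinge on the single observation that, because $\psi$ carries the (designated) extreme points of $A$ to those of $\psi(A)$ and is a bijection on points, it satisfies $\psi(\tilde A)=\widetilde{\psi(A)}$, from which both assertions follow by the case split on $v\in C$. Your extra bookkeeping (the two-point and complete-arc boundary cases, the inverse direction via $\psi^{-1}$) only fleshes out what the paper leaves implicit.
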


\begin{proof}
For every $v\in V$, the isomorphism~$\psi$ maps the arc~$\lambda(v)$ in~$\calI$ onto
the arc~$\mu(v)$ in~$\calJ$. If $\lambda(v)=[a^-,a^+]$ and $\mu(v)=[b^-,b^+]$, then
it is also known that $\psi(\{a^-,a^+\})=\{b^-,b^+\}$. This implies that
$\psi$~maps~$\widetilde{\lambda(v)}$ onto~$\widetilde{\mu(v)}$. Therefore,
$\psi$~maps~$\lambda^C(v)$ onto~$\mu^C(v)$, which means exactly that it is
an isomorphism from~$\calI^{\lambda(C)}$ to~$\calJ^{\mu(C)}$ and $\mu^C=\psi\circ\lambda^C$.
\end{proof}

Lemma~\ref{lem:flip} is true for isomorphisms between arc systems that respect
extreme points. The last condition is not always met. For example,
the transposition~$(23)$, while being an automorphism of the interval system 
$\{[1,3],[2,4]\}$, exchanges extreme points of two different intervals.
However, two isomorphic sharp interval systems always admit an isomorphism
that does respect extreme points. Before we prove this below 
in Lemma~\ref{lem:extreme-respect}, we need to recall some general notions
and facts about interval systems.

A \emph{slot} of a hypergraph~$\calH$ is an inclusion-maximal subset~$S$ of~$V(\calH)$ 
such that each hyperedge of~$\calH$ contains either all of~$S$ or none of it.
Recall that hyperedges $A$~and~$B$ \emph{overlap}, which is denoted as $A\between B$, 
if they intersect but neither of them includes the other.
With respect to the relation~$\between$, any hypergraph~$\calH$
is either connected or is split into \emph{overlap-connected components}.
If $\calO$~and~$\calO'$ are different overlap-connected components, then
either they are vertex-disjoint or
all hyperedges of one of the two components are contained in a single slot
of the other component.\footnote{%
This follows from a simple observation that the conditions $B\subset A$,
$B\between B'$, and $\neg(B'\between A)$ imply that $B'\subset A$.}
If $\calH$~is connected, this containment relation determines a tree-like
decomposition of~$\calH$ into its overlap-connected components.\footnote{%
If $\calH$~is an interval system, this decomposition gives rise to
the concept of a \emph{$PQ$-tree}~\cite{BoothL76}.}
The root in this tree will be referred to as the \emph{top component};
the other components will be called \emph{inner}.
The following fact is due to~\cite[Theorem~2]{ChenY91}; see also~\cite[Section~2.2]{KoeblerKLV11}.

\begin{lemma}[Chen and Yesha~\cite{ChenY91}]\label{lem:yesha}
Suppose that $\calI$~and~$\calJ$ are isomorphic overlap-connected
interval systems.
Let $I_1,\ldots,I_k$ be all slots of~$\calI$ listed in the order as they appear in the line.
Similarly, let $J_1,\ldots,J_k$ be the sequence of slots of~$\calJ$.
Then any isomorphism from~$\calI$ to~$\calJ$ maps either each~$I_s$ onto~$J_s$
or each~$I_s$ onto~$J_{k+1-s}$.  
\end{lemma}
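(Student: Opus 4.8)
The plan is to reduce the statement to a rigidity fact: an overlap-connected interval system in which every vertex lies in some hyperedge (which is the case in our applications) has, after collapsing its twin classes, an interval representation that is unique up to reversal. First I would observe that an isomorphism $\phi\colon\calI\to\calJ$ maps twin classes onto twin classes, and that a slot of a hypergraph is precisely a twin class; hence $\phi$ induces a bijection $I_s\mapsto J_{\pi(s)}$ for a permutation $\pi$ of $\{1,\dots,k\}$, and the goal becomes to show that $\pi$ is either the identity or the reversal $s\mapsto k+1-s$.

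Next I would reduce to the twin-free case. Every slot of $\calI$ is in fact an interval of the line: otherwise some pair of twins $u,w$ has a non-twin $v$ strictly between them, and any hyperedge witnessing $v\not\sim u$ must contain $v$ but miss $u$, hence also miss $w$ (because $u\sim w$), so it lies strictly between $u$ and $w$ on the line. A short case check shows that such ``inner'' hyperedges overlap only inner hyperedges, so no inner hyperedge overlaps a non-inner one; but $v$ lies in an inner hyperedge while $u$, being covered, lies in a non-inner hyperedge, so $\calI$ would not be overlap-connected — a contradiction. Collapsing each slot to a point thus turns $\calI$ into an overlap-connected twin-free interval system $\hat{\calI}$ whose vertices, listed in line order, are $I_1,\dots,I_k$; likewise for $\calJ$, and $\phi$ descends to an isomorphism $\hat{\calI}\to\hat{\calJ}$. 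So we may assume $\calI$ and $\calJ$ are twin-free.

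The engine of the argument is a local betweenness constraint: if $A\between B$ in an interval system, then in every interval representation the sets $A\setminus B$, $A\cap B$ and $B\setminus A$ form three consecutive segments of the line, in this order or its reverse, so each $y\in A\cap B$ lies between each $x\in A\setminus B$ and each $z\in B\setminus A$. Since $\phi$ preserves $\between$, membership, intersection and difference, it preserves every betweenness triple obtained this way. I would then prove the rigidity claim: for an overlap-connected twin-free interval system, these betweenness constraints are satisfied by exactly two linear orders of the vertex set — the line order and its reverse. Granting this, $\phi$ must take the line order of $\hat{\calI}$ to the line order of $\hat{\calJ}$ or to its reverse, i.e.\ $\pi=\mathrm{id}$ or $\pi(s)=k+1-s$; after undoing the reduction this is exactly the assertion of the lemma.

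The rigidity claim is the heart of the proof, and I would establish it by induction on the number $m$ of hyperedges. For $m\le1$, covering and twin-freeness leave at most one vertex, so the claim is trivial. For $m\ge2$ the overlap graph of $\calI$ (hyperedges as vertices, overlapping pairs as edges) is connected, hence has a hyperedge $H$ that is not a cut vertex, so $\calI'=\calI\setminus\{H\}$ is still overlap-connected. By the argument above its slots $T_1,\dots,T_p$ are intervals of the line, and each has size at most $2$, since two vertices of one $\calI'$-slot are $\calI'$-twins and can be separated only by $H$. The inductive hypothesis applied to the contraction of $\calI'$ shows that the orders respecting the betweenness constraints of $\calI'$ are exactly the refinements of $(T_1,\dots,T_p)$, or of its reverse, by arbitrary internal orders of the size-$2$ slots. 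Finally, $H$ is not a singleton (a singleton hyperedge overlaps nothing, impossible when $m\ge2$ and $\calI$ is overlap-connected), so it meets at least two slots, splits exactly the size-$2$ slots, and each split slot contains an endpoint of $H$; requiring $H$ to be an interval of the final line then forces, inside each split slot, the vertex lying in $H$ onto the side where $H$ continues — consistently with the line order of $\calI$ or, uniformly, with its reverse. Hence $\calI$ admits only these two orders, which closes the induction and, together with the reductions, proves the lemma. The step I expect to demand the most care is precisely this last one — determining which slots $H$ can split and then pinning down their internal order under a single global orientation; the rest is routine tracking of twin classes under the removal of $H$.
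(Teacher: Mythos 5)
First, a point of reference: the paper does not prove this lemma at all --- it is imported as a known fact, with a pointer to Chen and Yesha~\cite[Theorem~2]{ChenY91} and to \cite[Section~2.2]{KoeblerKLV11}. So there is no in-paper argument to compare against; what can be judged is whether your sketch would stand on its own. Your overall architecture is the standard route to this classical ``Q-node rigidity'' fact, and the preparatory reductions are sound: slots are exactly twin classes, so an isomorphism permutes them; your argument that a slot of an overlap-connected, covering interval system is an interval of the line (inner witnesses overlap only inner hyperedges) is correct; so is the reduction to the twin-free case and the observation that overlapping pairs $A\between B$ generate betweenness constraints preserved by any isomorphism. The choice of a non-cut hyperedge $H$ of the overlap graph, the bound $|T|\le 2$ on the slots of $\calI'=\calI\setminus\{H\}$, and the fact that $H$ splits exactly the size-$2$ slots are all fine.

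The genuine gap is in the last inductive step, and it is larger than a matter of ``care.'' Three things are asserted there without argument. (i)~The inductive hypothesis speaks about orders of the \emph{contracted} vertex set of $\calI'$, but you need a statement about orders of $V$ satisfying $\calI'$'s constraints; passing between the two requires showing that any such order keeps each slot of $\calI'$ consecutive. This is recoverable (e.g., by restricting the order to systems of slot representatives and noting that two representatives of the same $2$-element slot must occupy the same position relative to all other slots when there are at least three slots), but it is a real step, not bookkeeping. (ii)~Since $\calI$ is twin-free, there may be a vertex covered \emph{only} by $H$; it is a singleton slot of $\calI'$ touched by no constraint of $\calI'$, so your description of the constraint-satisfying orders of $\calI'$ (``refinements of $(T_1,\dots,T_p)$ or its reverse'') is false as stated --- that vertex can be placed anywhere --- and it must be pinned down by $H$'s constraints in the final step, which you do not address; note also that your covering hypothesis is then not inherited by $\calI'$, so the induction as set up does not literally apply. (iii)~Most importantly, the forcing in the last step cannot come from ``requiring $H$ to be an interval of the final line'': the rigidity claim is about the betweenness constraints only, so you must exhibit, for each split slot $\{t_1,t_2\}$ with $t_1\in H$, actual triples arising from some pair $H\between B$ that place $t_1$ on the correct side of $t_2$. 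Whether such a $B$ exists with the slot positioned suitably relative to $B$ (e.g., the slot may be disjoint from every hyperedge overlapping $H$, or contained in a hyperedge $D\supseteq H$ that does not overlap $H$) is exactly where the content of the theorem sits, and the sketch stops short of it. The statement is true and your plan can be completed, but as written the proof of the rigidity claim --- which you correctly identify as the heart of the matter --- is not yet a proof.
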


\begin{lemma}\label{lem:extreme-respect}
  Let $\calI$~and~$\calJ$ be isomorphic sharp interval systems. 
For every hypergraph isomorphism~$\psi$ from~$\calI$ to~$\calJ$
there is a hypergraph isomorphism~$\psi'$ from~$\calI$ to~$\calJ$
such that $\psi'(A)=\psi(A)$ for all $A\in\calI$ and, moreover, $\psi'$~respects
extreme points, that is, takes the extreme points of each arc $A\in\calI$ 
to the extreme points of the arc $\psi(A)\in\calJ$.
\end{lemma}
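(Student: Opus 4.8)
The plan is to leave the action of $\psi$ on slots untouched and to repair it only inside slots. A permutation of $V(\calJ)$ fixes every hyperedge of $\calJ$ setwise exactly when it permutes the elements within each slot of $\calJ$; hence for any $\tau$ permuting within the slots of $\calJ$, the composition $\psi'=\tau\circ\psi$ is again an isomorphism $\calI\to\calJ$ with $\psi'(A)=\psi(A)$ for all $A\in\calI$, and it remains only to choose $\tau$ so that $\psi'$ respects extreme points. The only feature of the linear order we shall exploit is the following consequence of sharpness: for an arc $A=[a^-,a^+]$, the start point $a^-$ is the $\le$-minimum of the leftmost slot met by $A$ and $a^+$ is the $\le$-maximum of the rightmost slot met by $A$; so ``respecting extreme points'' constrains $\psi'$ only at the minimal and maximal elements of each slot.

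First I would attach to every slot the overlap-component responsible for it. Since $\between$ and the slot relation are hypergraph invariants, $\psi$ carries the overlap-components of $\calI$ onto those of $\calJ$ and preserves their containment structure (each inner component lies inside a single slot of its parent). For an $\calI$-slot $S$, let $\calO_S$ be the deepest overlap-component of $\calI$ with $S\subseteq V(\calO_S)$; the components satisfying this form a chain under containment, and at least one exists because, if $v\in S$ is the (by sharpness unique) vertex for which an arc $A$ has $v$ as an extreme point, then $A\supseteq S$. Using sharpness again, $S$ is disjoint from every proper descendant of $\calO_S$, so $S$ is exactly the ``gap slot'' obtained from the $\calO_S$-slot containing it by deleting the vertex sets of the inner components nested in that $\calO_S$-slot; moreover $\psi$ takes each such gap slot of $\calO_S$ onto the corresponding gap slot of $\psi(\calO_S)$. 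Restricted to $V(\calO_S)$, $\psi$ is an isomorphism of overlap-connected interval systems, so Lemma~\ref{lem:yesha} says it maps the slot sequence of $\calO_S$ onto that of $\psi(\calO_S)$ either in order or in reverse; record this as $d(\calO_S)\in\{+,-\}$ (with the convention $d(\calO_S)=+$ if $\calO_S$ has a single slot). Now define $\tau$ by requiring, for each slot $T=\psi(S)$ of $\calJ$, that $\tau\circ\psi$ restricted to $S$ be the order-preserving bijection $S\to T$ when $d(\calO_S)=+$ and the order-reversing one when $d(\calO_S)=-$. Since the slots of $\calJ$ partition $V(\calJ)$, this determines a well-defined $\tau$ permuting within the slots of $\calJ$.

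It then remains to check that $\psi'=\tau\circ\psi$ respects extreme points. Fix $A=[a^-,a^+]\in\calI$ in the overlap-component $\calO$, spanning the $\calO$-slots $I_i<\dots<I_j$, and write $\psi(A)=[b^-,b^+]$. By the gap-slot description, the leftmost slot met by $A$ is the gap slot inside $I_i$ and the rightmost one is the gap slot inside $I_j$; both have $\calO$ as their responsible component, and $a^-=\min I_i$, $a^+=\max I_j$. The arc $\psi(A)$ spans the $\psi(\calO)$-slots $\psi(I_i),\dots,\psi(I_j)$, which by Lemma~\ref{lem:yesha} occur in the line order of $\calJ$ as $\psi(I_i)<\dots<\psi(I_j)$ if $d(\calO)=+$ and in reverse if $d(\calO)=-$; hence $b^-$ and $b^+$ are the minimum and maximum of the gap slots inside $\psi(I_i)$ and $\psi(I_j)$, matched accordingly. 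Substituting this into the definition of $\tau$ on these two gap slots, and using that $\psi$ sends a gap slot onto the corresponding gap slot, a short case split on the sign $d(\calO)$ gives $\{\psi'(a^-),\psi'(a^+)\}=\{b^-,b^+\}$, as desired.

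The middle paragraph is where I expect the real work to lie: the slot structure produced by Lemma~\ref{lem:yesha} lives inside a single overlap-component, whereas the slots of $\calI$ as a whole can be strictly finer, and one must reconcile the two. The gap-slot description is the device that does this cleanly, and its correctness leans essentially on sharpness — every point being an extreme point of exactly one arc — which both pins each point to a single overlap-component and prevents the gap slots from colliding with the components nested inside them.
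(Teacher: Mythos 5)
Your argument is correct, and it rests on the same two pillars as the paper's proof: Lemma~\ref{lem:yesha} fixes the slot correspondence of each overlap-component up to reversal, and sharpness pins the extreme points of an arc to the minimum and maximum of the outermost slots it meets, so that $\psi$ only needs to be repaired inside slots. The difference is organizational. The paper proceeds by induction on the overlap-component tree: it settles the overlap-connected base case by transposing the two points of a two-point slot where necessary (noting that, by sharpness, such a slot consists of one start point and one end point, so a single swap repairs both simultaneously), and then peels off inner components and merges the resulting isomorphisms. You instead give a single global definition of the correcting permutation $\tau$, slot by slot, and the price you pay is the ``gap slot'' analysis reconciling the slots of the full system with the slots of each overlap-component --- precisely the bookkeeping the paper's induction lets it avoid. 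That analysis does go through: each full-system slot is the gap of a unique component slot, its responsible component is exactly the component of every arc having an extreme point in it, and (by the same sharpness argument as in the paper) it contains at most one start point and one end point, so your order-preserving/order-reversing bijection is in fact the paper's identity-or-transposition in disguise. Two points you assert but should prove if you write this up: that the gap of a component slot is a \emph{single} slot of the full system (this needs the tree structure --- sibling components are vertex-disjoint and ancestor hyperedges contain all of $V(\calO_S)$ or miss it), and that $\psi$ restricted to $V(\calO_S)$ really is an isomorphism of overlap-connected interval systems to which Lemma~\ref{lem:yesha} applies. Also, your claim that ``$S$ is disjoint from every proper descendant of $\calO_S$'' follows from $S$ being a slot and $\calO_S$ being deepest, not from sharpness. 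Your version arguably makes the global picture more explicit than the paper's rather terse merging step; the paper's version is shorter because the induction hypothesis absorbs the nesting.
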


\begin{proof}
We proceed by induction on the number of overlap-connected components of~$\calI$.
In the base case, $\calI$~and~$\calJ$ are overlap-connected.
Using Lemma~\ref{lem:yesha}, we can assume that an isomorphism~$\psi$
from~$\calI$ to~$\calJ$ maps each~$I_s$ onto~$J_s$; the other case 
is symmetric.

We show that, for each $A\in\calI$, the isomorphism~$\psi$ either respects
the extreme points of~$A$ or can be locally modified to respect them.
Let $A=[a^-,a^+]$ and $A=\bigcup_{s=p}^qI_s$. It follows that $\psi(A)=\bigcup_{s=p}^qJ_s$,
$a^-\in I_p$, and $a^+\in I_q$. Moreover, if $\psi(A)=[b^-,b^+]$, then $b^-\in J_p$ and $b^+\in J_q$.

Notice now that, since $\calI$~is sharp, every slot contains at most two points.
Moreover, every two-point slot $[c^-,d^+]$ consists of the start point of some interval~$C$
and the end point of another interval~$D$. The transposition of the points $c^-$~and~$d^+$
violates neither $C$~nor~$D$, nor any other interval.

If $I_p$~is a one-point slot, we immediately conclude that $\psi(a^-)=b^-$.
Suppose that $I_p=[a^-,x^+]$ is a two-point slot. Let $J_p=[b^-,y^+]$.
If $\psi(a^-)=b^-$, we are done.
Otherwise we can ensure $\psi'(a^-)=b^-$ by changing~$\psi$ only on~$I_p$.

In order to ensure that $\psi'(a^+)=b^+$, we may need to modify~$\psi$ on~$I_q$.
In fact, we just need to inspect all two-point slots; if such a slot needs
modification, this will simultaneously fix inconsistency between a pair of start points
and a pair of end points. The analysis of the overlap-connected case is complete.

Suppose now that $\calI$ and $\calJ$ have more than one 
overlap-connected component, that is, are not overlap-connected.
If $\calI$~and~$\calJ$ are disconnected, the claim readily follows
by applying the induction assumption to the corresponding connected
components of $\calI$~and~$\calJ$.

It remains to consider the case when $\calI$~and~$\calJ$ are connected but not
overlap-connected. Assume that an interval $A\in\calI$ contains 
an inner overlap-connected component $\calS\subset\calI$,
then $\psi(V(\calS))\subset\psi(A)$ for any isomorphism~$\psi$ from~$\calI$ to~$\calJ$.
If we remove all points in~$V(\calS)$ from~$\calI$ and all points in~$\psi(V(\calS))$ from~$\calJ$,
the resulting interval systems $\calI'$~and~$\calJ'$ will still contain the extreme points of $A$~and~$\psi(A)$
respectively, and $\psi$~will induce an isomorphism from~$\calI'$ to~$\calJ'$.
By the induction assumption, there are isomorphisms from~$\calI'$ to~$\calJ'$ and from~$\calS$ to~$\psi(\calS)$
that agree with~$\psi$ on hyperedges and respect extreme points.
Merging them, we get the desired isomorphism~$\psi'$ from~$\calI$ to~$\calJ$.
\end{proof}

When we want to apply Lemmas~\ref{lem:flip} and~\ref{lem:extreme-respect},
the interval systems under consideration need to be sharp.
It may happen that we deal with an isomorphic copy of a sharp interval system
that itself is not sharp; consider
for example, $\{[1,4],[2,3]\}$ that is isomorphic to $\{[1,4],[1,2]\}$. 
In such cases the following fact will be helpful.

\begin{lemma}\label{lem:iso-sharpen}
  Suppose that for an interval system~$\calJ$ there is an isomorphic sharp
interval system~$\calJ'$. Then such~$\calJ'$ can be computed in logspace
along with an isomorphism from~$\calJ$ to~$\calJ'$.
\end{lemma}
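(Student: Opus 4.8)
The plan is to extract the structural consequences of the hypothesis and then build a suitably ``opened-up'' interval representation of $\calJ$ recursively along its overlap-component decomposition. First I would record what the existence of a sharp copy gives: writing $m=|\calJ|$, the sharp system $\calJ'$ has no repeated hyperedge, exactly $2m$ vertices, every vertex an extreme point (so no isolated vertex), and every slot of size at most two -- if three vertices lay in exactly the same hyperedges, each would be an extreme point of some hyperedge, which then contains that whole three-element slot, and a short case analysis using that the hyperedges are intervals yields a contradiction. Since $\calJ\cong\calJ'$, the same four facts hold for $\calJ$. I would then reduce to the connected case: the connected components of $\calJ$ (with respect to ``two hyperedges intersect'') are logspace-computable, a sharp copy of $\calJ$ restricts to sharp copies of its components (which occupy contiguous blocks), and conversely concatenating sharp copies of the components in a fixed order -- say, by least vertex in the given representation -- rebuilds a sharp copy of $\calJ$ together with the combined isomorphism. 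So I may assume $\calJ$ is connected.

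For connected $\calJ$ I would compute, in logspace, the tree of overlap-connected components recalled in Section~\ref{s:flipping}, rooted at the top component and with each inner component contained in a single slot of its parent; overlap-connectivity is a reachability relation and slot-containment is a local test. Then I assemble the target layout top-down. At an overlap-connected component $\calT$, Lemma~\ref{lem:yesha} fixes the order of its slots $Q_1,\dots,Q_p$ up to reversal, and I read off one orientation from the given representation. Each hyperedge of $\calT$ is a union $Q_a\cup\dots\cup Q_b$ of consecutive slots; in the block that expands $Q_j$ I place, in this order, the left endpoints of the $\calT$-hyperedges whose first slot is $Q_j$ (largest hyperedge first, so inclusion is respected), then the recursively built blocks of the inner components sitting inside $Q_j$ in any order, then the right endpoints of the $\calT$-hyperedges whose last slot is $Q_j$ (smallest first); the recursion bottoms out at single-hyperedge components. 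Carrying a bijection along the construction yields the isomorphism from $\calJ$ to the resulting system $\calJ'$. Every ingredient is a reachability query or a local test, and the recursion over the component tree is performed with the standard logspace tree traversal, so the algorithm runs in logspace.

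The real content is to verify that the hypothesis makes this recursion well-defined and its output sharp. By induction over the component tree one has to show that ``$\calJ$ has a sharp copy'' forces each subtree covering $k$ hyperedges to span exactly $2k$ vertices, that every slot indeed has size at most two, and that no vertex lying in no hyperedge of its own overlap-component is stranded in the interior of a slot that also contains inner components -- such a vertex could never become an extreme point. Granting these, every expanded block has precisely the right number of positions, the output is an interval representation of $\calJ$ (the structure inside $\calT$ is the one Chen--Yesha pins down, and each inner component lands exactly where the slot structure forces it), and it is sharp since every placed vertex is by construction the endpoint of a unique hyperedge. This structural step is the main obstacle: it turns the bare existence of a sharp copy into a locally forced recipe and, in particular, must handle nested hyperedges, whose two endpoints have to be pushed to opposite ends of a slot with all the inner material laid out between them -- which is why a naive approach that only spreads coincident endpoints into empty gaps, in the spirit of the proof of Lemma~\ref{lem:sharp}, does not suffice.
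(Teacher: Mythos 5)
Your approach is genuinely different from the paper's. The paper performs local surgery on the \emph{given} representation: it first deletes all interior points, then spreads each shared start point (resp.\ end point) into a run of fresh points ordered by inclusion of the corresponding intervals, and spends most of the proof showing by induction on the overlap-component tree that this preserves the isomorphism type (the delicate case being a point of a top-component slot that has to be relocated past an inner component). You instead rebuild a sharp layout from scratch, top-down along the same tree, using Lemma~\ref{lem:yesha} to fix each component's slot order and a fixed recipe to expand each slot, carrying the bijection along. Your closing claim that an endpoint-spreading approach ``does not suffice'' is not borne out: on your nested example $\{[1,4],[1,2]\}$ the paper's procedure first deletes the interior point $3$ and only then spreads the shared start point, yielding $\{[1,4],[2,3]\}$ directly; the difficulty there is proving isomorphism preservation, not designing the recipe.

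There is, however, a concrete gap in your recipe. You place ``the left endpoints of the $\calT$-hyperedges whose first slot is $Q_j$ (largest hyperedge first, so inclusion is respected)''. If two distinct hyperedges $H_1\supsetneq H_2$ of the same overlap-component both had first slot $Q_j$, then in $\calJ$ every vertex of $Q_j$ lies in $H_2$, whereas in your layout the position reserved for $H_1$'s left endpoint is excluded from the image of $H_2$; the image of $H_2$ is then one point too small, so \emph{no} bijection compatible with your blocks is a hypergraph isomorphism. Ordering ``largest first'' preserves the inclusion relation between hyperedges but not the hypergraph itself. The construction is rescued only because the hypothesis forces each slot of each component to be the first slot of at most one hyperedge and the last slot of at most one (in the sharp copy, a hyperedge whose first slot is $Q_j'$ must have $\min Q_j'$ as its left endpoint, so two starters would share an extreme point) --- a fact you neither prove nor include among your deferred verifications, and which your parenthetical suggests you did not realize was needed. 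Two smaller points: ``every vertex is an extreme point'' is a property of a representation, not of the abstract hypergraph, so it does not transfer from $\calJ'$ to $\calJ$ (in $\{[1,4],[1,2]\}\cong\{[1,4],[2,3]\}$ the point $3$ is extreme in neither interval); only ``no isolated vertex'' transfers. And the verifications you do list (each component subtree with $k$ hyperedges spans $2k$ vertices, slots of $\calJ$ have size at most two, the per-slot position count matches) are true and provable from the sharp copy, but they constitute the substance of the lemma and are only asserted, so as written the proposal is a plan with a missing load-bearing fact rather than a complete proof.
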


\begin{proof}
Suppose that~$\calJ$ is isomorphic to a sharp interval system~$\calS$
and $\varphi$~is an isomorphism from~$\calJ$ to~$\calS$. Since~$\calS$
cannot contain any 1-point interval, the same holds true for any
isomorphic system, in particular, for~$\calJ$. Furthermore, $\calJ$~cannot
contain any point that serves simultaneously as the start point of
an interval~$A$ and the end point of another interval~$B$;
otherwise the intervals $\varphi(A)$~and~$\varphi(B)$ in~$\calS$ would also intersect at
only one point and thus share an extreme point.

Given~$\calJ$, we construct an interval system~$\calJ'$ in three steps,
each doable in logspace.
\begin{enumerate}
\item 
Remove all \emph{interior points} from~$\calJ$, that is, those points that are not extreme
for any interval.
\item 
For each point~$x$ that is the start point of two or more intervals
$A_1,\ldots,A_k$, do the following. W.l.o.g., assume that
$A_1\supset A_2\supset\ldots\supset A_k$. Let $y\in A_1$ be the point
next to~$x$. We provide the arcs $A_1,\ldots,A_k$ with new pairwise distinct start 
points $x=a^-_1,\ldots,a^-_k$ that will be inserted between $x$~and~$y$ in this order.
\item 
Do similarly with the shared end points.
\end{enumerate}
Being removed in the first step, interior points never appear later.
The 2nd and the 3rd steps ensure that no two intervals in~$\calJ'$
share an extreme point. Thus, $\calJ'$~is sharp.
The main efforts are needed to show that $\calJ'$~is isomorphic to~$\calJ$.

To prove this, we use induction on the number of overlap-connected components of~$\calJ$.
In the base case, $\calJ$ is overlap-connected. 
Note that Lemma~\ref{lem:yesha} has the following interpretation.

\begin{claim}\label{cl:yesha}
  Let $\calI$~and~$\calI'$ be interval systems isomorphic as hypergraphs.
If they are overlap-connected, then either $\calI=\calI'$
or $\calI'$~is obtained from~$\calI$ by a reflection of the line.
\end{claim}

Claim~\ref{cl:yesha} implies that an overlap-connected~$\calJ$ is
geometrically congruent to~$\calS$ and is, therefore, sharp.
Thus, the algorithm just returns $\calJ'=\calJ$ in this case.

Suppose now that $\calJ$~is disconnected. Note that we can
obtain~$\calJ'$ by applying the algorithm to each connected component
of~$\calJ$ and merging the results. The isomorphism $\calJ'\cong\calJ$
readily follows by the induction assumption.

It remains to consider the case when $\calJ$~is connected but not
overlap-connected. Let~$\calT$ denote the top overlap-connected
component of~$\calJ$. By Claim~\ref{cl:yesha}, $\calT$~is congruent to
the top overlap-connected component of~$\calS$. In particular, no extreme
point is shared by two intervals in~$\calT$ (but $\calT$~has interior points).
Assume first that no extreme point of an interval in~$\calT$ is shared with any interval
in $\calJ\setminus\calT$. 
Then the output~$\calJ'$ is obtainable by leaving~$\calT$ as it is and by applying the algorithm
to the children-components within each slot of~$\calT$.
Note that $\varphi$~maps every inner overlap-connected
component of~$\calJ$ to an inner component of~$\calS$, which is sharp.
Using the induction assumption for each child-component within
each slot of~$\calT$, we conclude that $\calJ'\cong\calJ$ also in this case.

Assume now that there is an extreme point~$x$ of an interval~$\calT$
that is also an extreme point of some interval $A\in\calJ\setminus\calT$. 
Fix~$A$ to be the longest of such intervals. Denote the overlap-connected
component of~$\calJ$ containing~$A$ by~$\calA$. Let~$T$ denote the slot
of~$\calT$ containing~$x$. Note that $\calA$~is one of the children-components
located in~$T$. Looking at the image~$\varphi(T)$ in~$\calS$, we see that $T$~must
contain a point~$z$ not included in any inner overlap-connected component
(namely $z=\varphi^{-1}(z')$ for $z'\in\varphi(T)$ being an extreme point
of an interval in the top component of~$\calS$).
Moving~$z$ to any other place in~$T$ outside the children-components
results in an interval system isomorphic to~$\calJ$. In particular,
we can make~$z$ a new extreme point of an interval in~$\calT$ instead of~$x$.
Denote the resulting interval system by~$\tilde\calJ$. We can, therefore,
obtain the same outcome~$\calJ'$ as follows.
\begin{itemize}
\item 
Remove~$V(\calA)$ from~$\tilde\calJ$ and denote the result by~$\calK$.
Looking at~$\varphi$ on~$V(\calA)$ and on $V(\calJ)\setminus V(\calA)$,
we see that both $\calA$~and~$\calK$ are isomorphic to sharp interval systems.
\item 
Apply the algorithm to $\calA$~and~$\calK$ and denote the outputs by
$\calA'$~and~$\calK'$, respectively.
\item 
Reinsert~$\calA'$ in~$\calK'$ within the corresponding slot.
\end{itemize}
Since $\calA'\cong\calA$ and $\calK'\cong\calK$ by the induction assumption,
we conclude that $\calJ'\cong\calJ$ as claimed.

In general, the algorithm is run on an arbitrary~$\calJ$. After
computing~$\calJ'$ we invoke the algorithm of~\cite{KoeblerKLV11} to find a
hypergraph isomorphism from~$\calJ$ to~$\calJ'$. In the case of failure,
we conclude that the input system~$\calJ$ is not isomorphic to any sharp interval system.
\end{proof}

\section{Pairwise intersections as a complete isomorphism invariant for interval hypergraphs}
\label{sec:inters-sizes}

Given a hypergraph~$\calH$ and a bijection $\nu\function V\calH$, we define
the \emph{pairwise-intersection matrix} $M_\nu=(m_{uv})_{u,v\in V}$ by
$m_{uv}=|\nu(u)\cap\nu(v)|$. If $\psi$~is an isomorphism from~$\calH$ to~$\calK$
and the bijection $\mu\function V\calK$ is defined by $\mu=\psi\circ\lambda$,
then obviously $M_\lambda=M_\mu$. It turns out that the converse is also true
if $\calH$~is an interval hypergraph.

\begin{figure}
  \centering
  \begin{tikzpicture}
    \node (V) at (180:2cm) {$V$};
    \node[minimum width=1.6em] (I) at (60:1cm)  {$\calI$};
    \node[minimum width=1.6em] (J) at (-60:1cm) {$\calJ$};
    \draw[->] (V) -- node[above] {$\lambda$} (I);
    \draw[->] (V) -- node[below] {$\mu$} (J);
    \draw[->] (I) -- node[right] {$\psi$} (J);
  \end{tikzpicture}
  \caption{Lemma \protect\ref{lem:inters-sizes}: If $M_\lambda=M_\mu$ and
    $\calI$~is an interval hypergraph, then
    $\calI\cong\calJ$.}\label{fig:inters-sizes}
\end{figure}

\begin{lemma}[Fulkerson and Gross~\cite{FulkersonG65}]\label{lem:inters-sizes}
  Let~$\calI$ be an interval system and $\calJ$~be an arbitrary hypergraph.
Suppose that $M_\lambda=M_\mu$ for bijections $\lambda\function V\calI$ and
$\mu\function V\calJ$. Then there is a hypergraph isomorphism~$\psi$ such that
$\mu=\psi\circ\lambda$; see Fig.~\ref{fig:inters-sizes}.
\end{lemma}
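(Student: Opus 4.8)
The plan is to reconstruct the interval system $\calI$, up to isomorphism, from the matrix $M_\lambda$ by a procedure that simultaneously certifies $\calJ\cong\calI$; since the reconstruction will respect the indexing of hyperedges by $V$, the resulting isomorphism $\psi$ will satisfy $\mu=\psi\circ\lambda$ automatically. First I would record the relations that $M_\lambda$ (hence also $M_\mu$) encodes between hyperedges: $|\lambda(v)|=m_{vv}$; $\lambda(u)\cap\lambda(v)=\emptyset$ iff $m_{uv}=0$; $\lambda(u)\subseteq\lambda(v)$ iff $m_{uv}=m_{uu}$; so $\lambda(u)=\lambda(v)$ iff $m_{uu}=m_{vv}=m_{uv}$, and $\lambda(u)\between\lambda(v)$ iff $0<m_{uv}<\min\{m_{uu},m_{vv}\}$. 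As $M_\lambda=M_\mu$, these relations hold verbatim for $\calJ$ under the shared indexing, so $\bbI\calI$ and $\bbI\calJ$ coincide as graphs on $V$, the overlap relations coincide, and hence $\calI$ and $\calJ$ split into the same intersection-connected components and, within a component, carry the same tree of $\between$-connected components with the same pattern of which inner component lies in which slot of its parent (the latter being again a matter of the $\subseteq$- and $\between$-relations, which are shared). By the same structural induction used in the proofs of Lemmas~\ref{lem:extreme-respect} and~\ref{lem:iso-sharpen} — peel inner $\between$-connected components off the slots of their parents and recurse — it suffices to treat the base case in which $\calI$ is a single $\between$-connected interval system.

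In that base case the arrangement of $\calI$ is captured by its slots $I_1,\dots,I_k$ in left-to-right order, their widths $w_s=|I_s|$, and the interval of slots $\{\,s:I_s\subseteq\lambda(v)\,\}$ covered by each hyperedge. The key observation is that, for an interval system, $|\bigcap_{v\in S}\lambda(v)|=\min_{u,v\in S}m_{uv}$ for every nonempty $S\subseteq V$ (the common part is the slot-interval between the largest left slot and the smallest right slot, and its size is already attained by the relevant pair of hyperedges); substituting this into inclusion--exclusion shows that the multiset of ``columns'' $\{\,v:I_s\subseteq\lambda(v)\,\}$, each with its multiplicity $w_s$, is a function of $M_\lambda$ alone. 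By Lemma~\ref{lem:yesha} the left-to-right order of the columns of a $\between$-connected interval system is an isomorphism invariant up to reversal, so $\calI$ is thereby reconstructed up to isomorphism, together with an explicit index-respecting isomorphism onto the reconstructed system $\calI^{*}$. If we also knew that $\calJ$ had exactly these columns with exactly these multiplicities, we would be done: bijecting the points of $\calI^{*}$ with the points of $\calJ$ column class by column class gives an index-respecting isomorphism $\calI^{*}\to\calJ$, and composing yields $\psi$.

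The remaining, and I expect main, obstacle is precisely to force $\calJ$ into the same arrangement, i.e.\ to show that the identities $m_{uv}=\sum_{\Gamma\ni u,v}w_{\calJ}(\Gamma)$, where $w_{\calJ}(\Gamma)$ counts the points of $\calJ$ whose column is $\Gamma$, together with the interval structure of $\calI$, force $w_{\calJ}=w_{\calI}$; note that the linear relations alone do not suffice once $|V|\ge 4$, so the interval hypothesis must be used essentially. I would prove this by induction on $|V|$: remove from $\calI$ an inclusion-minimal hyperedge $\lambda(v_0)$ (recognizable from $M$) together with the points private to it, apply the inductive hypothesis to the smaller pair $\calI\setminus\{\lambda(v_0)\}$ and $\calJ\setminus\{\mu(v_0)\}$ to identify $\calJ\setminus\{\mu(v_0)\}$ with the reconstructed interval system, and then reinsert $v_0$. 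The delicate step is this reinsertion: one must show that $|\mu(v_0)|=m_{v_0v_0}$ and the intersection sizes $m_{v_0v}$, measured against the now-explicit interval system $\calJ\setminus\{\mu(v_0)\}$, leave no choice but to place $\mu(v_0)$ as the expansion of the slot-interval that $\lambda(v_0)$ occupies in $\calI$ — after which $\psi$ is extended by any bijection between the private points of $\lambda(v_0)$ and those of $\mu(v_0)$. Everything else is bookkeeping with the $\between$-component tree and with isolated points.
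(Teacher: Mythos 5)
The paper does not actually prove this lemma --- it is imported from Fulkerson and Gross \cite{FulkersonG65} without proof --- so there is no in-paper argument to compare yours against; I can only judge your proposal on its own terms. Your preliminary reductions are sound: the relations $\lambda(u)\cap\lambda(v)=\emptyset$, $\lambda(u)\subseteq\lambda(v)$ and $\lambda(u)\between\lambda(v)$ are all read off from $M$; the identity $\lvert\bigcap_{v\in S}\lambda(v)\rvert=\min_{u,v\in S}m_{uv}$ is correct for interval systems (by the Helly property of intervals, together with the observation that the pair formed by the interval with leftmost right endpoint and the interval with rightmost left endpoint already attains the overall intersection); and M\"obius inversion then recovers the multiset of columns of $\calI$, with multiplicities, from $M$ alone. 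You also correctly locate the real content of the lemma: for the arbitrary hypergraph $\calJ$ only the inequality $\lvert\bigcap_{v\in S}\mu(v)\rvert\le\min_{u,v\in S}m_{uv}$ is available, and one must show that $\calJ$ is nevertheless forced into the same column multiset.

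That step, however, is only asserted, and it is a genuine gap rather than bookkeeping --- it is precisely the combinatorial heart of the theorem. In your reinsertion argument the unknowns are the occupancies $d_j=\lvert\mu(v_0)\cap T_j\rvert$ of the slots $T_1,\dots,T_k$ of $\calJ'\cong\calI'$, constrained by $\sum_{j\colon T_j\subseteq\mu(v)}d_j=m_{v_0v}$ for each $v\ne v_0$. The coefficient matrix here is the interval--slot incidence matrix of $\calI'$, and it need not have full column rank: for intervals occupying the slot runs $\{1,2\}$, $\{2,3\}$, $\{3,4\}$ and $\{1,2,3,4\}$ the vector $(1,-1,1,-1)$ lies in the kernel, so the linear constraints together with $\lvert\mu(v_0)\rvert=m_{v_0v_0}$ do \emph{not} by themselves pin down the $d_j$. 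Uniqueness can only come from the side conditions --- integrality, the box constraints $0\le d_j\le\lvert T_j\rvert$, and the fact that $\lambda(v_0)$ meets a \emph{consecutive} run of slots and saturates all but the two extreme ones, which forces any admissible kernel perturbation to vanish --- and none of this analysis is carried out. Since this is exactly the point where a non-interval $\calJ$ could in principle redistribute intersection mass (compare the three size-two sets with pairwise intersections of size one, realizable both as a sunflower and as a triangle, which shows the conclusion fails without the interval hypothesis), the argument cannot be considered complete without it. If you want a self-contained proof, I would suggest either supplying this kernel analysis in full or switching to an induction on the number of points rather than on the number of hyperedges.
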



We will use the fact that $\calI$~and~$\lambda$ are efficiently reconstructible
from a given $M=M_\lambda$.

\begin{lemma}[Köbler, Kuhnert, and Watanabe~\cite{KoeblerKW12}]\label{lem:L-reconst}
  There is a logspace algorithm that, given an integer matrix $M=(m_{uv})_{u,v\in V}$,
constructs an interval system~$\calI$ and a bijection $\lambda\function V\calI$
such that $M=M_\lambda$ or detects that such an interval system does not exist.
\end{lemma}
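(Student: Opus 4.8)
The plan is to use a \emph{construct-and-verify} strategy. Given~$M$, we build (deterministically, in logspace) a candidate interval system~$\calI$ together with a bijection $\lambda\function V\calI$, then recompute $M_\lambda$ and compare it with~$M$: if they agree we output $(\calI,\lambda)$, otherwise we report that no interval system realizing~$M$ exists. Soundness is immediate, and the recomputation and comparison are plainly logspace, so the content is the construction. Moreover, by Lemma~\ref{lem:inters-sizes} all interval systems with a given pairwise-intersection matrix are isomorphic; hence it suffices to describe a construction that, whenever \emph{some} realization exists, produces one, i.e.\ that realizes the common isomorphism type.

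We first normalize, all in logspace. If $M$ is not symmetric, has a negative entry, or satisfies $m_{uv}>\min(m_{uu},m_{vv})$ for some $u,v$, we reject outright. If $m_{uu}=m_{vv}=m_{uv}$ for distinct $u,v$, every realization must set $\lambda(u)=\lambda(v)$; we keep one representative per class of this relation and, at the end, copy the computed interval to all members of the class. Degenerate rows with $m_{vv}=0$ force $m_{uv}=0$ for all~$u$ and are placed as fresh isolated points. Henceforth the intervals to be built are nonempty and pairwise distinct, and for distinct $u,v$ the matrix determines which of the following holds: $\lambda(u)\cap\lambda(v)=\emptyset$ (iff $m_{uv}=0$); $\lambda(u)\subsetneq\lambda(v)$ (iff $0<m_{uv}=m_{uu}<m_{vv}$); $\lambda(v)\subsetneq\lambda(u)$ (symmetrically); or $\lambda(u)\between\lambda(v)$ (iff $0<m_{uv}<\min(m_{uu},m_{vv})$). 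From the relation $m_{uv}>0$ we obtain the intersection-connected components, placed on pairwise disjoint stretches of the line (this changes no entry of $M_\lambda$); inside a component the relation~$\between$ yields the overlap-connected components and, as recalled in Section~\ref{s:flipping}, their tree-like arrangement --- a top component whose slots host child components, recursively.

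The core step is placing a single overlap-connected component~$\calO$. By Claim~\ref{cl:yesha} such a system is determined up to a reflection of the line, so it suffices to recover its slot order up to reversal: then each $\lambda(v)$ is the block of $m_{vv}$ consecutive points starting at the first slot it occupies, and the entries~$m_{uv}$ pin all these blocks relative to one another. Concretely, pin some interval $\lambda(v_0)$ to $[1,m_{v_0v_0}]$; for an overlapping partner~$u$, the two possible placements of $\lambda(u)$ --- extending to the left or to the right of $\lambda(v_0)$ --- are each fully determined by $m_{uu}$ and~$m_{uv}$, and overlap-connectivity together with the rigidity lemma forces a consistent extension of a single such choice to all of~$\calO$. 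Running this propagation in a fixed order yields one of the two reflections; since $M_\lambda$ is reflection-invariant, the final verification still succeeds. Assembling the overlap-tree then amounts to substituting, inside each slot of each component, the recursively placed child components and translating everything onto one integer line.

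The main obstacle is carrying out this assembly \emph{in logspace}: the overlap-tree decomposition is naturally recursive, while logspace forbids an unbounded call stack. As in the interval-hypergraph algorithm of~\cite{KoeblerKLV11}, one would reorganize the computation so that the final coordinate of a point --- or, what suffices, the left-to-right order of any two points --- is read off directly from the tree by comparing the two root-to-leaf paths, a logspace operation; proving that this reorganization yields the same placement, and dealing with the bookkeeping of coincident endpoints and multi-point slots, is where the real effort lies. The remaining ingredients --- the case analysis for $\between$ and inclusion, the component decomposition, the per-component coordinate assignment, and the verification --- are routine logspace computations.
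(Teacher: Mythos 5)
First, note that the paper does not prove this lemma at all: it is imported as a black-box result from~\cite{KoeblerKW12}, so there is no in-paper proof to compare your argument against. Judged on its own, your construct-and-verify framing is sound (the verification step, and the reduction to producing \emph{some} realization whenever one exists, are fine), your case analysis reading off disjointness, containment and overlap from the entries of~$M$ is correct, and the decomposition into intersection-connected and then overlap-connected components, with Claim~\ref{cl:yesha}-style rigidity inside each overlap component, is indeed the right combinatorial skeleton; it matches the route taken in~\cite{KoeblerKW12} and in the interval-hypergraph algorithm of~\cite{KoeblerKLV11}.

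The genuine gap is that you explicitly defer the only part of the statement that has any content, namely the logspace bound. Two concrete steps are missing. (i) Inside one overlap-connected component, ``running this propagation in a fixed order'' is not a logspace primitive: to place $\lambda(u)$ relative to the pinned $\lambda(v_0)$ you must follow a path in the overlap graph and accumulate signed offsets, resolving at each step the left/right ambiguity of the new interval from triples of matrix entries; doing this with $O(\log n)$ memory requires a canonical choice of path and a proof that the orientation decision is locally determined by the matrix --- neither is supplied. (ii) The tree of overlap components is assembled recursively, and you acknowledge that eliminating the recursion (e.g.\ computing the relative order of two points by comparing root-to-leaf paths) ``is where the real effort lies'' --- but that effort is precisely the theorem being claimed, so as written the argument establishes reconstructibility but not logspace reconstructibility. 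A minor additional slip: a row with $m_{vv}=0$ cannot be realized by ``a fresh isolated point'' (that would force $m_{vv}=1$); such rows must either be rejected or realized by an empty hyperedge, depending on the convention for interval systems.
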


\section{A representation scheme for HCA~graphs in logspace}\label{s:repr}

We are now prepared to prove Theorem~\ref{thm:main}.
By Lemma~\ref{lem:reducing}, it suffices to design a 
(not necessarily canonical) representation scheme
for HCA~graphs that have no twins and no universal vertices and to show
that this scheme is computable in logspace.

Let~$G$ be an input graph on $n$~vertices. 
We assume that $G$~is~HCA and has neither twins nor universal vertices.
Note that then its bundle hypergraph~$\bund G$
has no multi-hyperedges $B_u=B_v$ and no complete hyperedge $B_u=\cliq G$.
Let $\beta_G\function{V(G)}{\bund G}$ be the Helly intersection representation of~$G$
as defined in Lemma~\ref{lem:GtoBG}.
By Lemma~\ref{lem:Gavril}, $\bund G$~is a CA~hypergraph.
Consider its arbitrary arc representation $\rho\function{\bund G}\calB$.
As it will be beneficial to deal with sharp arc models,
consider an arbitrary sharpening $\sigma\function\calB\calA$ of $\calB$
to a sharp Helly arc system~$\calA$, which exists because $\calB$~contains no
multi-arcs and no complete arc. Define 
\begin{equation}
  \label{eq:alpha}
\alpha=\sigma\circ\rho\circ\beta_G. 
\end{equation}
 Thus, $\alpha\function{V(G)}\calA$
is a Helly arc representation of~$G$ by a sharp arc model~$\calA$.

\begin{lemma}\label{lem:a}
For~$\alpha$ defined by~\refeq{alpha},
the pairwise-intersection matrix~$M_\alpha$ depends on~$G$ only
(and neither on~$\rho$ nor on~$\sigma$) and can be computed in logspace.
\end{lemma}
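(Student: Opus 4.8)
The plan is to express each entry $m_{uv}=|\alpha(u)\cap\alpha(v)|$ of the matrix $M_\alpha$ in purely graph-theoretic terms, so that the definition is manifestly independent of the chosen arc representation~$\rho$ and sharpening~$\sigma$, and at the same time is computable from~$G$ in logspace. The starting point is that, by~\refeq{alpha}, the arc~$\alpha(v)$ is the sharpened image of the bundle~$B_v$, hence $|\alpha(u)\cap\alpha(v)|$ counts the points of the circle $\circl$ carrying $\calA$ that lie in both $\sigma(\rho(B_u))$ and $\sigma(\rho(B_v))$. Because $\calA$ is sharp, every such point is the unique extreme point of exactly one arc of~$\calA$, so it suffices to count, among the $2n$ arcs~$\alpha(w)$, those whose start point lies in $\alpha(u)\cap\alpha(v)$ plus those whose end point lies there. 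I would therefore set up, for each ordered pair of vertices, the task of deciding for every $w$ whether the start (resp.\ end) point of $\alpha(w)$ is a common point of $\alpha(u)$ and $\alpha(v)$, and show this predicate is determined by~$G$.

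The key step is to translate ``the start/end point of $\alpha(w)$ lies in $\alpha(u)\cap\alpha(v)$'' into a statement about the bundles $B_u,B_v,B_w$ and, via Lemma~\ref{lem:BvsN}, into a statement about the closed neighborhoods $N[u],N[v],N[w]$. The clean observation is that an extreme point~$p$ of $\alpha(w)$ lies in $\alpha(v)$ essentially when $\alpha(w)$ and $\alpha(v)$ intersect \emph{and} $\alpha(w)$ does not ``stick out'' past~$p$ relative to $\alpha(v)$ — i.e.\ when $\alpha(w)\subseteq\alpha(v)$, or $\alpha(w)\cup\alpha(v)=\circl$, or $\alpha(w)$ overlaps $\alpha(v)$ with that particular extreme point inside. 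These are exactly the relations controlled by the sharpening conditions \ref{cond:int}--\ref{cond:cc} of Definition~\ref{def:sharp}: Condition~\ref{cond:int} pins down intersection/disjointness, Condition~\ref{cond:sub} pins down containment, Condition~\ref{cond:cc} pins down circle covers, and Condition~\ref{cond:ov} fixes which extreme point is inside when two arcs strictly overlap. Since $\sigma$ is a sharpening, all four of these boolean relations between $\alpha(w)$, $\alpha(v)$ and $\alpha(u)$ coincide with the corresponding relations between the bundles $B_w,B_v,B_u$ in~$\calB=\rho(\bund G)$, and those in turn are arc-representation invariant (they only depend on the hypergraph $\bund G$, hence on~$G$). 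Using Lemma~\ref{lem:BvsN}, each such bundle relation is equivalent to a neighborhood relation: $B_w\cap B_v\neq\emptyset$ iff $w\in N[v]$; $B_w\subseteq B_v$ iff $N[w]\subseteq N[v]$; $B_w\cup B_v=\cliq G$ iff the three neighborhood conditions \ref{cond:cover}--\ref{cond:non-contained} hold; and the ``which side'' relation $B_w\preccurlyeq^* B_v$ can be detected by combining these (e.g.\ it implies the existence of a maxclique through the exposed endpoint avoiding the other arc, which by Lemma~\ref{lem:BvsN}.4 is a neighborhood-intersection condition). Consequently $m_{uv}$ is a sum, over all $w\in V(G)$, of an indicator that is a fixed boolean combination of the predicates ``$x\in N[y]$'' and ``$N[x]\subseteq N[y]$'' evaluated at triples drawn from $\{u,v,w\}$; this makes $M_\alpha$ depend only on~$G$.

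For the logspace bound I would note that the closed neighborhoods, the containments $N[x]\subseteq N[y]$, and the conditions \ref{cond:cover}--\ref{cond:non-contained} are all decidable by logspace predicates on~$G$ (they are simple first-order queries over the adjacency relation, each checkable with a constant number of pointers into the input). Hence each entry $m_{uv}$ is obtained by a logspace loop over $w\in V(G)$ incrementing a counter bounded by $2n$, and the whole $n\times n$ matrix is produced within logarithmic space. The main obstacle I anticipate is the careful bookkeeping in the circular setting: unlike intervals, arcs have no global ``left/right,'' so one must handle the circle-cover case (where an endpoint of $\alpha(w)$ can lie in $\alpha(v)$ even though $\alpha(w)\not\subseteq\alpha(v)$) and the possibility that $\alpha(w)$ is the complete arc, and verify that the endpoint-counting formula for $|\alpha(u)\cap\alpha(v)|$ is correct in every configuration of the triple $\{\alpha(u),\alpha(v),\alpha(w)\}$ — in particular that sharpness makes the endpoints distinct so no double counting occurs, and that Condition~\ref{cond:ov} really does disambiguate the strict-overlap case. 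Once that case analysis is pinned down, invariance and the logspace bound both follow immediately.
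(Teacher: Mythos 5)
Your overall strategy coincides with the paper's: since $\calA$ is sharp, every point of the circle is an extreme point of exactly one arc, so $m_{uv}=|\alpha(u)\cap\alpha(v)|$ is the sum over all $w$ of the number of extreme points of $\alpha(w)$ lying in $\alpha(u)\cap\alpha(v)$; the contribution of each $w$ is determined by the mutual position of the three arcs, which Conditions~\ref{cond:int}--\ref{cond:cc} of Definition~\ref{def:sharp} tie to relations among the bundles $B_u,B_v,B_w$, and Lemma~\ref{lem:BvsN} translates those into neighborhood conditions checkable in logspace. That is exactly the paper's proof skeleton. But there are two genuine problems with your write-up. First, the directed relation $B_w\preccurlyeq^* B_v$ (``which extreme point of $\alpha(w)$ is the one inside $\alpha(v)$'') is \emph{not} determined by $G$: composing $\rho$ with a reflection of the circle yields another valid arc representation of $\bund G$ in which every instance of $\preccurlyeq^*$ is reversed, so no graph-theoretic predicate can recover the direction, and your parenthetical suggestion that Lemma~\ref{lem:BvsN}.4 detects it cannot work as stated. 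Only the symmetric data is invariant: disjointness, containment, circle cover, strict overlap, and --- for a triple of pairwise strictly overlapping arcs --- whether $\alpha(w)$ lies ``between'' $\alpha(u)$ and $\alpha(v)$, which is exactly $B_u\cap B_v\subseteq B_w$, i.e.\ $N[u]\cap N[v]\subseteq N[w]$ by Lemma~\ref{lem:BvsN}.4. The counting must be organized around the \emph{number} of endpoints of $\alpha(w)$ in the intersection, never around identifying which endpoint it is.

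Second, the case analysis you defer to the final paragraph is the actual substance of the proof, and one of its cases needs an idea you do not mention: when $\alpha(w)$ strictly overlaps both $\alpha(u)$ and $\alpha(v)$ there are a priori three cyclic configurations, and the \emph{Helly property} is what excludes the configuration $\alpha(v)\preccurlyeq^*\alpha(w)\preccurlyeq^*\alpha(u)$; only after that exclusion does the single test $N[u]\cap N[v]\subseteq N[w]$ decide between the ``between'' case ($0$ contributed points) and the ``outside'' case ($1$ contributed point). You also need the (easy, but necessary) observations that in a twin-free, universal-vertex-free graph containment of bundles and circle cover are mutually exclusive, and that the diagonal entries $m_{vv}$ are handled by counting the two own extreme points of $\alpha(v)$ plus $2$, $2$, or $1$ points from each neighbor according to whether its arc is contained in, forms a circle cover with, or strictly overlaps $\alpha(v)$. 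As it stands your text is a correct plan with one incorrect commitment (detectability of the directed overlap relation) and without the case analysis that constitutes the proof.
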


\begin{proof}
  Consider first $m_{vv}=|\alpha(v)|$. The arc~$\alpha(v)$ contains two its own
extreme points and, furthermore, every vertex~$u$ adjacent to~$v$
contributes one or two extreme points of~$\alpha(u)$ into~$\alpha(v)$.
More precisely, the following configurations are possible.
\begin{description}
\item[$\alpha(u)\subset\alpha(v)$ --- 2 contributed points:] 
By the definition of sharpening, this happens exactly when $B_u\subset B_v$,
which is equivalent to the logspace-verifiable
condition $N[u]\subset N[v]$ by Lemma~\ref{lem:BvsN}.2.
\item[$\alpha(u)\tied\alpha(v)$ --- 2 contributed points:]
By the definition of sharpening, this happens exactly when $B_u\cup B_v=\cliq G$,
which is equivalent to the logspace-verifiable
conditions~\ref{cond:cover}--\ref{cond:non-contained} in Lemma~\ref{lem:BvsN}.3.
\item[$\alpha(u)\between^*\alpha(v)$ --- 1 contributed point:]
the remaining case.
\end{description}

Consider now $m_{uv}=|\alpha(u)\cap\alpha(v)|$ for $u\ne v$.
In the simplest case of non-adjacent $u$~and~$v$ we have $m_{uv}=0$.
Also, $m_{uv}=m_{uu}$ if $\alpha(u)\subset\alpha(v)$ or, equivalently, $N[u]\subset N[v]$.
Similarly, $m_{uv}=m_{vv}$ if $N[v]\subset N[u]$.
Furthermore, $m_{uv}=m_{uu}+m_{vv}-2n$ if $\alpha(u)\tied\alpha(v)$, which is verifiable 
by Lemma~\ref{lem:BvsN}.3.

It remains to compute~$m_{uv}$ if $\alpha(u)\between^*\alpha(v)$. The intersection contains
one extreme point of~$\alpha(u)$ and one of~$\alpha(v)$. Any other vertex~$w$
contributes 0, 1, or 2 extreme points of~$\alpha(w)$. The contribution is~0 when
$\alpha(w)$~is disjoint from $\alpha(u)$~or~$\alpha(v)$ or when it contains
at least one of these arcs. Let us analyze the remaining cases (some cases symmetric
up to swapping $u$~and~$v$ are omitted). The first four conditions are 
verifiable in logspace similarly to the above by Lemma~\ref{lem:BvsN}.
\begin{description}
\item[{\rm $\alpha(w)\subset\alpha(u)$ and $\alpha(w)\subset\alpha(v)$} --- 2 contributed points,]
\item[{\rm $\alpha(w)\subset\alpha(u)$ and $\alpha(w)\between^*\alpha(v)$} --- 1 contributed point,]
\item[{\rm $\alpha(w)\tied\alpha(u)$ and $\alpha(w)\tied\alpha(v)$} --- 2 contributed points,]
\item[{\rm $\alpha(w)\tied\alpha(u)$ and $\alpha(w)\between^*\alpha(v)$} --- 1 contributed point,]
\item[{\rm $\alpha(w)\between^*\alpha(u)$ and $\alpha(w)\between^*\alpha(v)$}:]
This case is more complicated. W.l.o.g., suppose that $\alpha(u)\preccurlyeq^*\alpha(v)$
and, hence, $\rho(B_u)\preccurlyeq^*\rho(B_v)$. 
Note first that the arc configuration $\alpha(v)\preccurlyeq^*\alpha(w)\preccurlyeq^*\alpha(u)$
is non-Helly and, hence, cannot occur. There remain two subcases.
\begin{description}
\item[$\alpha(u)\preccurlyeq^*\alpha(w)\preccurlyeq^*\alpha(v)$ --- 0 contributed points:] 
By the definition of sharpening, this happens exactly when
$\rho(B_u)\preccurlyeq^*\rho(B_w)\preccurlyeq^*\rho(B_v)$, which is equivalent
$\rho(B_u)\cap\rho(B_v)\subset\rho(B_w)$. Since $\rho$~is a hypergraph isomorphism,
the last condition reads $B_u\cap B_v\subset B_w$,
which is equivalent to the logspace-verifiable
condition $N[u]\cap N[v]\subseteq N[w]$ by Lemma~\ref{lem:BvsN}.4.
\item[{\rm $\alpha(w)\preccurlyeq^*\alpha(u)$ and $\alpha(w)\preccurlyeq^*\alpha(v)$ or
$\alpha(u)\preccurlyeq^*\alpha(w)$ and $\alpha(v)\preccurlyeq^*\alpha(w)$}] {\bf--- 1 contributed point:}
This is the complementary subcase.
\end{description}
\end{description}
The analysis is complete. The matrix entry~$m_{uv}$ is obtained by summing up the
contributions of~$\alpha(w)$ over all~$w$.
\end{proof}

Next, we need to find an arbitrary maxclique $C\in\cliq G$.
We have to argue that this is doable in logspace.
An edge~$uv$ in a graph~$G$ is called \emph{essential} if
it is contained in a unique maxclique~$C$.
The following lemma implies that, for each~$uv$, we can check in logspace
if it is essential. If so, the corresponding maxclique~$C$ can be computed
also in logspace as $C=N[u]\cap N[v]$.

\begin{lemma}\label{lem:essential}
  An edge~$uv$ is essential if and only if the intersection $N=N[u]\cap N[v]$
is a clique.
\end{lemma}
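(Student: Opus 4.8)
The plan is to prove both directions by relating maxcliques containing the edge~$uv$ to cliques inside $N=N[u]\cap N[v]$. Observe first that any clique~$K$ with $u,v\in K$ satisfies $K\subseteq N[u]$ and $K\subseteq N[v]$, hence $K\subseteq N$; conversely, $N$ itself contains $u$ and $v$ (as $u\in N[v]$ and $v\in N[u]$ because $uv$ is an edge), but of course $N$ need not be a clique. So the maxcliques through~$uv$ are exactly the inclusion-maximal cliques contained in~$N$ that contain both $u$ and~$v$; moreover any maximal clique contained in~$N$ automatically contains $u$ and~$v$, since $\{u,v\}\cup\{w\}$ is a clique for every $w\in N$ and so $u,v$ can always be added.

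For the direction ($\Longleftarrow$): if $N$ is itself a clique, then $N$ is a clique containing~$uv$, and it is the \emph{unique} maxclique doing so. Indeed, any maxclique~$C$ with $u,v\in C$ satisfies $C\subseteq N$ by the observation above, and since $N$ is a clique and $C$ is inclusion-maximal, $C=N$. Hence $uv$ is essential.

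For the direction ($\Longrightarrow$): suppose $N$ is \emph{not} a clique, so there exist $x,y\in N$ with $x$ not adjacent to~$y$. Extend $\{u,v,x\}$ to a maxclique~$C_x$ and $\{u,v,y\}$ to a maxclique~$C_y$; both contain the edge~$uv$. They are distinct because $x\in C_x$ while $x\notin C_y$ (as $y\in C_y$ and $x\not\sim y$). Thus $uv$ lies in at least two maxcliques and is not essential. This completes the proof; the only point requiring a little care is the ``automatically contains $u$ and~$v$'' observation, which is where adjacency of $u$ and~$v$ (i.e.\ that $uv$ is genuinely an edge) gets used, but it follows immediately from maximality. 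I do not anticipate a real obstacle here — the argument is purely combinatorial and elementary; the logspace claims in the surrounding text follow because checking whether $N[u]\cap N[v]$ is a clique, and (when it is) outputting it, are trivially logspace-computable from the adjacency relation.
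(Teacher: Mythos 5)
Your proof is correct and follows essentially the same route as the paper: the key observation that every clique containing $uv$ lies in $N$, the immediate consequence for the backward direction, and extending the two triangles $\{u,v,x\}$ and $\{u,v,y\}$ to distinct maxcliques for the forward direction. The extra remarks about maximal cliques inside $N$ automatically containing $u$ and $v$ are valid but not needed.
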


\begin{proof}
Note first that any clique containing~$uv$ is included in~$N$.
If $N$~is a clique, this implies that $N$~is actually a maxclique
and, moreover, it is the only maxclique containing~$uv$.

Suppose now that $N$~contains non-adjacent vertices $x$~and~$y$.
Then two triangles $\{u,v,x\}$ and $\{u,v,y\}$ can be extended to
two different maxcliques both containing~$uv$. 
\end{proof}

It is known~\cite{OpsutR81} that if $G$~is a connected interval graph, then
every maxclique in~$G$ contains an essential edge.
This allows to compute the bundle hypergraph~$\bund G$ in logspace, which was an important
ingredient of our canonical representation scheme for interval graphs in~\cite{KoeblerKLV11}.
However, connected HCA~graphs do not enjoy this property;
the Haj\'os (or 3-sun) graph depicted in Fig.~\ref{fig:non-hca}(a) is a counterexample.
Fortunately, every nonempty HCA~graph has at least one maxclique
that can be efficiently found due to the fact that it contains an essential edge.

\begin{lemma}\label{lem:essential-HCA}
  Every nonempty HCA~graph~$G$ contains an essential edge~$uv$.
\end{lemma}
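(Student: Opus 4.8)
The plan is to exploit a Helly arc model of $G$ and identify a maxclique that sits at an ``extreme'' location of the circle, one which is therefore pinned down by a single edge. Concretely, start with the Helly arc representation $\alpha\function{V(G)}\calA$ produced above (via Lemma~\ref{lem:Gavril} and a sharpening, as in~\refeq{alpha}), so that $\calA$ is a sharp Helly arc system on a $2n$-point circle with no complete arc; here we may assume $G$ is connected, since a nonempty component contains at least one edge and an edge essential in a component is essential in $G$. For each point $x$ on the circle let $C_x=\set*{v}{x\in\alpha(v)}$; by Lemma~\ref{lem:BGisHellyMin} the maxcliques of $G$ are exactly the inclusion-maximal sets among the $C_x$. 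The idea is to find a point $x$ such that $C_x$ is a maxclique \emph{and} some arc $\alpha(u)$ with $u\in C_x$ has one of its two extreme points immediately adjacent to $x$ on the circle; then any vertex $v$ whose arc also has an extreme point in that gap (together with $u$, this describes the edge $uv$) will force the clique, because no third maxclique can live that close to the ``boundary'' of $\alpha(u)$.

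The key combinatorial step is to choose $x$ well. First I would pick an arc $\alpha(u)$ that is inclusion-minimal in $\calA$ (such an arc exists and is not complete), say $\alpha(u)=[a^-,a^+]$, and consider the maxclique $C$ represented by a point $x$ lying just inside $a^-$ (the successor of $a^-$). Minimality of $\alpha(u)$ means no arc is strictly contained in $\alpha(u)$, and sharpness means $a^-$ is the start point of $\alpha(u)$ alone. The claim to verify is: the edge $uv$, where $v$ is chosen so that $\alpha(v)$ is the ``innermost'' arc overlapping $\alpha(u)$ on the left (equivalently, $\alpha(v)$ starts latest among arcs containing $a^-$ but not $a^+$), is essential — its only maxclique is $N[u]\cap N[v]=C_x$. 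The point is that any maxclique containing $u$ and $v$ is represented by a point in $\alpha(u)\cap\alpha(v)$, which is a short initial segment of $\alpha(u)$; by minimality of $\alpha(u)$ and the choice of $v$, every arc meeting this segment in fact contains all of it, so $C_x$ is constant on it, giving uniqueness. I would then translate this back to the graph via Lemma~\ref{lem:essential}, checking that $N[u]\cap N[v]$ is a clique.

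The main obstacle I anticipate is making the selection of $v$ (and hence $u$) \emph{canonical and logspace-computable directly from $G$}, without reference to the model $\calA$ — since the lemma as stated only asserts existence, existence via the model suffices, but one still must argue cleanly that the configuration I describe cannot be spoiled. In particular, I must rule out the annoying case where $\alpha(u)$ and $\alpha(v)$ form a circle cover (then $\alpha(u)\cap\alpha(v)$ is not a simple initial segment) or where there are several arcs tied for ``innermost'', which is where the Helly property does the real work: a non-Helly ``nested staircase'' configuration $\alpha(v)\preccurlyeq^*\alpha(w)\preccurlyeq^*\alpha(u)$ is exactly what's excluded (as noted in the proof of Lemma~\ref{lem:a}), so no third arc can interleave. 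Handling the degenerate cases (universal vertices, $G$ a single clique, arcs that wrap) is routine but must be dispatched: if $G$ is a clique then any edge is essential; if $G$ has a universal vertex, pass to the graph with it removed and reinsert. Once the Helly obstruction argument is in place, the rest is bookkeeping on the cyclic order of the $2n$ extreme points.
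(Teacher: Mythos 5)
There is a genuine gap, and it sits in the parenthetical ``(equivalently, $\alpha(v)$ starts latest among arcs containing $a^-$ but not $a^+$)''. For arcs overlapping $\alpha(u)$ on the left, \emph{starting latest} and \emph{having the smallest overlap with $\alpha(u)$} (i.e., ending earliest) are not equivalent, and your argument needs the latter. Concretely, consider the sharp Helly arc system on the $10$-point circle given by $\alpha(u)=[1,5]$, $\alpha(w_1)=[9,2]$, $\alpha(w_2)=[10,4]$, $\alpha(b)=[3,8]$, $\alpha(d)=[6,7]$. The resulting graph has no twins and no universal vertices, and $\alpha(u)$ is inclusion-minimal. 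The arcs containing $a^-=1$ but not $a^+=5$ are $\alpha(w_1)$ and $\alpha(w_2)$; the one starting latest is $\alpha(w_2)$, so your rule selects $v=w_2$. But the edge $uw_2$ lies in the two distinct maxcliques $\{u,w_1,w_2\}$ and $\{u,w_2,b\}$, so it is not essential; equivalently, $N[u]\cap N[w_2]$ contains the non-adjacent pair $w_1,b$. Your central claim --- that every arc meeting $\alpha(u)\cap\alpha(v)$ contains all of it --- fails here: $\alpha(w_1)$ meets $[1,4]=\alpha(u)\cap\alpha(w_2)$ without containing it. The Helly property excludes only the \emph{cyclic} staircase $A\preccurlyeq^*B\preccurlyeq^*C\preccurlyeq^*A$; it does not exclude the ``fan'' $\alpha(w_1)\preccurlyeq^*\alpha(w_2)\preccurlyeq^*\alpha(u)$ with $\alpha(w_1)\preccurlyeq^*\alpha(u)$, which is exactly what spoils your choice.

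The paper's proof makes a selection that is immune to this: it fixes $v$ with $B_v$ inclusion-minimal (your inclusion-minimal arc, via Lemma~\ref{lem:BvsN}.2) and then chooses $u\in N[v]$ minimizing $\card{B_v\cap B_u}$, i.e., the number of maxcliques shared by $u$ and $v$, which in arc terms is the length of the overlap, from whichever side it occurs. A non-adjacent pair $x,y\in N[u]\cap N[v]$ then yields (since $\beta(x)$ and $\beta(y)$ must enter $\beta(u)\cap\beta(v)$ from different sides) an arc $\beta(x)$ with $\card{\beta(v)\cap\beta(x)}<\card{\beta(v)\cap\beta(u)}$, contradicting minimality; circle-cover partners of $\beta(v)$ are excluded beforehand because, by minimality of $B_v$ and twin-freeness, such a partner would be a universal vertex. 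If you replace ``starts latest'' by ``ends earliest'', your plan essentially coincides with this; but even then the clean way to finish is not that every arc meeting the segment contains it (arcs may still \emph{begin} strictly inside $\alpha(u)\cap\alpha(v)$, as $\alpha(b)$ would if it started at $2$), but that the point-cliques $C_p$ for $p$ ranging over $\alpha(u)\cap\alpha(v)$ form an increasing chain, since no admissible arc can \emph{end} strictly inside that segment. As written, the proposal does not establish the lemma.
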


\begin{proof}
  It is enough to prove the lemma for~$G$ with no twins and no universal vertices.
Consider the Helly arc representation $\beta=\rho\circ\beta_G$ of~$G$ where
$\rho$~is an arc representation of the CA~hypergraph~$\bund G$.
Fix~$v$ to be a non-isolated vertex whose maxclique bundle~$B_v$ is minimal under inclusion.
Note that $B_v\cup B_w=\cliq G$ for no vertex~$w\in N[v]$ for else $w$~would be
universal. Thus, for every~$w$ either $B_v\subseteq B_w$ or $B_v\between^* B_w$.
If all~$w\in N[v]$ satisfy the former condition, $N[v]$~is a clique and we are done
(we can choose~$u$ arbitrarily from~$N[v]$). Otherwise fix~$u\in N[v]$ to be
a vertex with $|B_v\cap B_u|$ as small as possible.
Note that $B_v\between^* B_u$. 

It remains to argue that $uv$~is an essential edge.
By Lemma~\ref{lem:essential}, we have to show that
the intersection $N=N[u]\cap N[v]$ is a clique.
Assume, to the contrary, that $N$~contains non-adjacent $x$~and~$y$.
Looking at the arc representation~$\beta$, we see that the arcs $\beta(x)$~and~$\beta(y)$
must intersect the arc $\beta(v)\cap\beta(u)$ from different sides.
One of $\beta(x)$~and~$\beta(y)$ must, therefore, contain the extreme point of~$\beta(v)$
contained in~$\beta(u)$. Without loss of generality, suppose that this is~$\beta(x)$.
It follows that $|\beta(v)\cap\beta(x)|<|\beta(v)\cap\beta(u)|$,
giving a contradiction with the assumption that $|B_v\cap B_u|$ is the smallest possible.
\end{proof}

\begin{lemma}\label{lem:b}
Let~$\alpha$ be a sharp Helly representation of a graph~$G$ without universal vertices.
Let $C\in\cliq G$ be a maxclique in~$G$. Consider the $C$-flipped mapping 
$\alpha^C\function{V(G)}{\calA^{\alpha(C)}}$. Then
  $\calI=\calA^{\alpha(C)}$ is an interval system, that is, there are two consecutive points
$x$~and~$y$ on the circle such that no interval $I\in\calI$ contains both $x$~and~$y$
unless $I=[x,y]$ is the complete arc with designated extreme points $x$~and~$y$
(obtained by flipping the arc $\{x,y\}$).
\end{lemma}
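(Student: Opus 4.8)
The plan is to use the Helly property to produce, after flipping, a point of the circle that almost no arc crosses. Since $C$ is a clique, the arcs $\alpha(v)$, $v\in C$, pairwise intersect, so by the Helly property of $\calA$ their intersection $J:=\bigcap_{v\in C}\alpha(v)$ is nonempty, and being an intersection of arcs through a common point it is itself an arc; as $G$ has no universal vertices, $\calA$ has no complete arc, so $J$ is a \emph{proper} arc. First I would rule out $\card J=1$: a single common point $p$ would be the start point of some arc of $C$ (its predecessor leaves that arc) and the end point of some arc of $C$, and since $\calA$ is sharp this forces a one-point arc, which cannot occur in a sharp system. So $\card J\ge 2$; write $J=[p^-,p^+]$ and put $x:=p^-$ and $y:=$ the successor of $x$ on the circle. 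I claim that $x,y$ witness that $\calI=\calA^{\alpha(C)}$ is an interval system.

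Two elementary observations do the work. (a) For every $w\notin C$ the arc $\alpha(w)$ is disjoint from $J$: a common point of $\alpha(w)$ and $J$ lies in $\alpha(w)$ and in every $\alpha(v)$, $v\in C$, hence witnesses that $w$ is adjacent to all of $C$, contradicting maximality of~$C$. (b) For every $v\in C$ the flipped arc $\widetilde{\alpha(v)}$ meets $J$ at most in the two endpoints $p^-,p^+$: indeed $\widetilde{\alpha(v)}\cap\alpha(v)$ consists exactly of the two extreme points of $\alpha(v)$, and since $J\subseteq\alpha(v)$ an extreme point of $\alpha(v)$ lying in $J$ must be an endpoint of $J$ (an interior point of $J$ has both its neighbours inside $\alpha(v)$ and so cannot be extreme for $\alpha(v)$).

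Now I would split on $\card J$. If $\card J\ge 3$, then $y$ is an interior point of $J$, so by (a) and (b) it lies in no arc of $\calI$; a fortiori no arc of $\calI$ contains both $x$ and $y$, and $\calI$ has no complete arc at all, since a complete arc of $\calI$ could only be the flip of a two-point arc of $\alpha(C)$, whereas every arc of $\alpha(C)$ contains $J$ and hence has at least three points. So $\calI$ is an interval system. If $\card J=2$, then $y=p^+$ and I must show that every $I\in\calI$ with $\{p^-,p^+\}\subseteq I$ is complete. If $I$ were non-complete it would use the cycle edge $p^-\to p^+$; but a non-complete arc of $\calI$ is either $\alpha(w)$ for some $w\notin C$, impossible by (a), or $\widetilde{\alpha(v)}$ for some $v\in C$ with $\alpha(v)$ not a two-point arc, and then $\alpha(v)\supseteq J$ uses the edge $p^-\to p^+$, so its flip does not (a proper arc and its flip partition the edges of the cycle). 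Hence $I$ is complete; and because any two distinct two-point arcs of a sharp system are disjoint, the clique-image $\alpha(C)$ contains at most one two-point arc, necessarily $J$ itself, so $I$ is precisely the complete arc with designated extreme points $p^-,p^+$ obtained by flipping $\{p^-,p^+\}$.

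The part requiring the most care is the bookkeeping with complete arcs in the case $\card J=2$: one has to be sure that flipping $C$ cannot create any extra complete arc beyond the allowed one, and that the allowed one carries exactly the designated extreme points demanded by the statement. This is where sharpness (no one-point arc; two intersecting two-point arcs are impossible) and the absence of universal vertices (no complete arc already present in $\calA$) enter; the Helly property itself is used only to guarantee $J\ne\emptyset$.
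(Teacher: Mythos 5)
Your overall strategy is the same as the paper's (cut the circle at a common point of $\alpha(C)$ that is an extreme point of one of its arcs), but there is one genuine flaw in your write-up, localized at the very first step: the assertion that $J=\bigcap_{v\in C}\alpha(v)$, ``being an intersection of arcs through a common point, is itself an arc,'' is false, and the hypotheses of the lemma do not exclude the bad case. If $C$ contains two vertices $u,v$ with $\alpha(u)\tied\alpha(v)$ --- a circle-cover pair, which sharpening deliberately produces and which neither the Helly property nor the absence of universal vertices forbids --- then $\alpha(u)\cap\alpha(v)$ already has two connected components, and $J$ can inherit this. Concretely, take the sharp Helly system $\{[1,6],[5,2],[3,4],[7,8]\}$ on an $8$-point circle: the first two arcs form a maxclique of the intersection graph (the third meets only the first, the fourth only the second), there is no universal vertex, and $J=\{1,2\}\cup\{5,6\}$ is disconnected. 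Your proof then never gets off the ground, since $p^-$ and $p^+$ are undefined. The repair is easy: replace $J$ by a single connected component of $\bigcap_{v\in C}\alpha(v)$. Everything you actually use about $J$ --- that it is a nonempty arc contained in every $\alpha(v)$ with $v\in C$, that its two circle-neighbours each lie outside some $\alpha(v)$ (which drives your exclusion of $|J|=1$ via sharpness and your claim~(b)), and that it is disjoint from $\alpha(w)$ for $w\notin C$ --- holds verbatim for a component.

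With that fix your argument is correct and complete, and indeed considerably more detailed than the paper's own proof, which merely picks a common point $x$ of $\alpha(C)$ that is an extreme point of some $A\in\alpha(C)$, lets $y$ be the point of $A$ next to $x$, and declares the pair to work. Your observations (a) and (b), the use of sharpness to rule out a one-point common intersection and to see that a flipped arc meets its original only in its two extreme points, and the bookkeeping for the designated complete arc when $|J|=2$ are exactly the details the paper leaves to the reader. The only advantage of the paper's formulation is that, by working with one well-chosen common point rather than with the whole common intersection, it never has to confront the connectivity question that trips up your version.
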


\begin{proof}
  Since $\alpha$~is a Helly representation of~$G$, the arcs in the
  set~$\alpha(C)$ have a common point~$x$.
  Suppose that $x$~is an extreme point of an arc $A\in\alpha(C)$.
  Choosing~$y$ to be the point of~$A$ next to~$x$, we obtain the claimed pair~$x,y$.
\end{proof}

We remark that the sharpness condition in Lemma~\ref{lem:b} is crucial. Indeed,
consider the graph~$G$ and its HCA~representation~$\alpha$ given in
Fig.~\ref{fig:bundle}. The $C_6$-flipped mapping~$\alpha^{C_6}$ results in a
non-interval arc system.

\begin{lemma}\label{lem:c}
Let $\alpha$ be defined by~\refeq{alpha} and
$\lambda=\alpha^C$ for $C\in\cliq G$. 
Then $M_\lambda$~can be computed in logspace from~$M_\alpha$ and~$C$.
\end{lemma}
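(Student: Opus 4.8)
The plan is to give, for every ordered pair $(u,v)$, an explicit arithmetic formula for the entry $m'_{uv}=|\alpha^C(u)\cap\alpha^C(v)|$ of $M_\lambda$ in terms of $n$, the entries $m_{uu},m_{vv},m_{uv}$ of $M_\alpha$, and the two bits ``$u\in C$'', ``$v\in C$''. Since $M_\lambda$ has $O(n^2)$ entries, each the value of such a formula, this yields the logspace computation from $M_\alpha$ and $C$.

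First I would collect the elementary facts about flipping. Write $U$ for the $2n$-point circle underlying $\calA=\alpha(G)$ (the arc system in~\refeq{alpha}), and for an arc $A=[a^-,a^+]\in\calA$ let $E_A=\{a^-,a^+\}$. Since $\calA$ is sharp it has no one-point arc, and it has no complete arc (as $G$ has no universal vertex, $\bund G$ — hence $\calB$ and its sharpening $\calA$ — has no complete hyperedge/arc); so $|E_A|=2$ always, and flipping gives the disjoint decomposition $\tilde A=(U\setminus A)\sqcup E_A$, whence $A\cap\tilde A=E_A$, $A\cup\tilde A=U$, and $|\tilde A|=2n+2-|A|$. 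From $\tilde A=(U\setminus A)\sqcup E_A$ one reads off, for any two arcs $A,B$,
\begin{align*}
  \tilde A\cap B&=(B\setminus A)\sqcup(E_A\cap B),\\
  \tilde A\cap\tilde B&=\bigl(U\setminus(A\cup B)\bigr)\sqcup(E_A\setminus B)\sqcup(E_B\setminus A)\sqcup(E_A\cap E_B),
\end{align*}
hence $|\tilde A\cap B|=|B|-|A\cap B|+|E_A\cap B|$ and $|\tilde A\cap\tilde B|=2n-|A\cup B|+|E_A\setminus B|+|E_B\setminus A|+|E_A\cap E_B|$, with $|A\cup B|=|A|+|B|-|A\cap B|$. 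So the only quantities not already present in $M_\alpha$ are the boundary terms $|E_A\cap B|$, $|E_B\cap A|$, $|E_A\cap E_B|$.

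The key point is that these boundary terms are determined by the \emph{type} of the pair $\{A,B\}$. By sharpness $|E_A\cap E_B|=0$ for $A\neq B$. Moreover, because $\calA$ is sharp and — being a sharpening, by Condition~\ref{cond:cc} of Definition~\ref{def:sharp} (cf.\ Remark~\ref{rem:normal-stable}) — contains no pair of arcs $A=[a^-,a^+]$, $B=[b^-,b^+]$ with $a^+,b^-$ consecutive on the circle and $b^+\in A$, any two distinct arcs of $\calA$ stand in exactly one of the five relations: disjoint, $A\subsetneq B$, $B\subsetneq A$, $A\between^*B$, $A\tied B$. (Without the sharpening property a sixth case would arise — $A\cup B=\circl$ with $A$ containing only one extreme point of $B$ — which is numerically indistinguishable from a circle cover but has different boundary terms, so exhaustiveness here really matters.) In these five cases $|E_A\cap B|$ equals $0,2,0,1,2$ respectively (and $|E_B\cap A|$ by swapping $A$ and $B$): for a disjoint pair this is clear; if $A\subsetneq B$ then $E_A\subseteq A\subseteq B$ gives $2$, while neither extreme point of $B$ lies in the proper sub-arc $A$ of the sharp system (e.g.\ $b^-\in A$ would force the sub-arc $A$ to start at $b^-$, so $a^-=b^-$, against sharpness), giving $|E_B\cap A|=0$; if $A\tied B$ then by definition $A$ contains both extreme points of $B$ and conversely; and if $A\between^*B$ the four distinct extreme points occur in cyclic order $a^-b^-a^+b^+$ or $b^-a^-b^+a^+$, so exactly one of $a^-,a^+$ lies in $B$ and exactly one of $b^-,b^+$ in $A$. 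Finally, the type of $\{\alpha(u),\alpha(v)\}$ is read off from $M_\alpha$ alone: the pair is disjoint iff $m_{uv}=0$; otherwise a containment $\alpha(u)\subseteq\alpha(v)$ iff $m_{uv}=m_{uu}$, a containment $\alpha(v)\subseteq\alpha(u)$ iff $m_{uv}=m_{vv}$, a circle cover iff $m_{uv}=m_{uu}+m_{vv}-2n$, and strictly overlapping in the remaining case (these tests, applied in this order, are exclusive because no arc of $\calA$ is empty or complete and no two arcs of $\calA$ coincide, and one never needs to tell $\alpha(u)\preccurlyeq^*\alpha(v)$ from $\alpha(v)\preccurlyeq^*\alpha(u)$, since both give boundary term $1$). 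Assembling the pieces: $m'_{vv}=m_{vv}$ if $v\notin C$ and $m'_{vv}=2n+2-m_{vv}$ if $v\in C$; for $u\ne v$ with neither in $C$, $m'_{uv}=m_{uv}$; for $u\ne v$ with exactly one of them, say $w$, in $C$ and the other $w'$ not, $m'_{uv}=m_{w'w'}-m_{uv}+|E_{\alpha(w)}\cap\alpha(w')|$ with the boundary term supplied by the type; and for $u\ne v$ with both in $C$, $m'_{uv}=2n+4+m_{uv}-m_{uu}-m_{vv}-s$ where $s=|E_{\alpha(u)}\cap\alpha(v)|+|E_{\alpha(v)}\cap\alpha(u)|$ equals $0$ for a disjoint pair, $4$ for a circle cover, and $2$ in the three remaining types. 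Each $m'_{uv}$ is thus a fixed arithmetic expression in $n$ and entries of $M_\alpha$, computable in logspace.

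I expect the main obstacle to be the structural claim above — that the five relations exhaust the distinct-pair case for $\calA$ (which hinges on the sharpening property) and that the boundary terms are genuinely constant on each type — together with checking that the degenerate short-arc configurations are absorbed uniformly, most notably: when $\alpha(v)$ is a two-point arc, $\alpha^C(v)$ becomes the complete arc with designated extreme points, and the formulas above must then reduce to $m'_{uv}=|\alpha^C(u)|$, which they do precisely because in a sharp system a two-point arc is only ever disjoint from or properly contained in another arc. The remaining verifications are routine bookkeeping.
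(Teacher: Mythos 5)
Your proposal is correct and follows essentially the same route as the paper: both reduce the computation to a case analysis over the five possible relations between two distinct arcs of the sharp system (disjoint, two containments, circle cover, strict overlap), and your formulas for $m^\lambda_{uv}$ agree entry-for-entry with the paper's table; your unified derivation via the boundary terms $\lvert E_A\cap B\rvert$ is just a more systematic way of obtaining the same case-by-case expressions. The only (minor, and arguably cleaner) deviation is that you detect the relation type of $\{\alpha(u),\alpha(v)\}$ from the entries of $M_\alpha$ alone, whereas the paper reads it off the graph via Lemma~\ref{lem:BvsN} and Definition~\ref{def:sharp}; this makes your argument match the literal statement ``from $M_\alpha$ and $C$'' more closely, and your verification that the five types exhaust all distinct pairs in a sharpened system (and that two-point arcs cause no trouble) is exactly the point the paper leaves implicit.
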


\begin{proof}
Let $M_\lambda=(m^\lambda_{uv})$ and $M_\alpha=(m^\alpha_{uv})$. We have
$m^\lambda_{vv}=m^\alpha_{vv}$ if $v\notin C$ and $m^\lambda_{vv}=2n+2-m^\alpha_{vv}$ if $v\in C$.
For different $u$~and~$v$, $m^\lambda_{uv}$ is computed by inspection of several cases. 
If $u\notin C$ and $v\notin C$, then $m^\lambda_{uv}=m^\alpha_{uv}$.
If $u\in C$ and $v\notin C$, then\\
\begin{tabular}{lcl}
$\alpha(u)\cap\alpha(v)=\emptyset$ & $\Rightarrow$ & $m^\lambda_{uv}=m^\alpha_{vv}$;\\
$\alpha(u)\subset\alpha(v)$ & $\Rightarrow$  & $m^\lambda_{uv}=m^\alpha_{vv}-m^\alpha_{uu}+2$;\\
$\alpha(u)\supset\alpha(v)$ & $\Rightarrow$  & $m^\lambda_{uv}=0$;\\
$\alpha(u)\tied\alpha(v)$ & $\Rightarrow$ & $m^\lambda_{uv}=2n+2-m^\alpha_{uu}$;\\
$\alpha(u)\between^*\alpha(v)$ & $\Rightarrow$ & $m^\lambda_{uv}=m^\alpha_{vv}-m^\alpha_{uv}+1$.
\end{tabular}\\
The case of $u\notin C$ and $v\in C$ is symmetric.
If $u\in C$ and $v\in C$, then\\
\begin{tabular}{lcl}
$\alpha(u)\cap\alpha(v)=\emptyset$ & $\Rightarrow$ & $m^\lambda_{uv}=2n+4-m^\alpha_{uu}-m^\alpha_{vv}$;\\
$\alpha(u)\subset\alpha(v)$ &  $\Rightarrow$ & $m^\lambda_{uv}=2n+2-m^\alpha_{vv}$;\\
$\alpha(u)\supset\alpha(v)$  & $\Rightarrow$ & $m^\lambda_{uv}=2n+2-m^\alpha_{uu}$;\\
$\alpha(u)\tied\alpha(v)$ & $\Rightarrow$ & $m^\lambda_{uv}=0$;\\
$\alpha(u)\between^*\alpha(v)$  & $\Rightarrow$ & $m^\lambda_{uv}=2n+2+m^\alpha_{uv}-m^\alpha_{uu}-m^\alpha_{vv}$.
\end{tabular}\\
Recall that the relationship between $\alpha(u)$~and~$\alpha(v)$ 
is recognizable by Lemma~\ref{lem:BvsN} and Definition~\ref{def:sharp}.
\end{proof}

Now we can complete the description of our algorithm for computing an Helly arc
representation of the input graph $G$.  Suppose that
$\alpha\function{V(G)}{\calA}$ is a normalized Helly arc representation of $G$. 
What follows does not depend on a particular choice of~$\alpha$.

\begin{description}
\item[\it Step 1.] 
Compute the intersection matrix~$M_\alpha$.
By Lemma~\ref{lem:a}, this matrix can be computed in logspace and does not depend on~$\alpha$.
\item[\it Step 2.] 
Compute a maxclique~$C$ of~$G$. This is doable in logspace
according to Lemmas~\ref{lem:essential} and~\ref{lem:essential-HCA}.
\item[\it Step 3.] 
Compute the intersection matrix $M_\lambda$ for the $C$-flipped mapping $\lambda=\alpha^C$.
This can be done in logspace due to Lemma~\ref{lem:c}.

Note that, by Lemma~\ref{lem:b}, the flipped arc system
$\calI=\calA^{\alpha(C)}$ is actually an interval system.
\item[\it Step 4.] 
Compute an interval system~$\calJ$ and a
mapping $\mu\function{V(G)}\calJ$ such that $M_\mu=M_\lambda$.
For that purpose, we invoke the algorithm of Lemma~\ref{lem:L-reconst}.

Note that, by Lemma~\ref{lem:inters-sizes}, $\calJ$~and~$\calI$ are isomorphic
hypergraphs. Recall that $\lambda$ is a mapping from $V(G)$ to $\calI$.
Lemma~\ref{lem:inters-sizes}, moreover, ensures that there is a hypergraph isomorphism~$\psi$
from~$\calI$ to~$\calJ$ such that
\[
\mu=\psi\circ\lambda.
\]
\item[\it Step 5.] 
Modify $\mu$ and $\calJ$ so that $\calJ$ becomes sharp if it
is not such from the very beginning. 
This is possible due to Lemma~\ref{lem:iso-sharpen} because
$\calI\cong\calJ$ is a sharp interval system.

By Lemma~\ref{lem:extreme-respect} we can assume that $\psi$~respects extreme points
of intervals in $\calI$ and~$\calJ$.
\item[\it Step 6.] 
Now, we ``close'' the interval $1,\ldots,2n$ to the cycle where $1$~succeeds~$2n$
and regard $\calJ$~and~$\calI$ as arc systems, that possibly have complete arcs
with designated extreme points. The mapping~$\psi$ stays a hypergraph isomorphism
respecting extreme points of all arcs.
\item[\it Step 7.] 
Compute the $C$-flipped mapping $\mu^C\function{V(G)}{\calJ^{\mu(C)}}$.
By Lemma~\ref{lem:flip},
\[
\mu^C=\psi\circ\lambda^C=\psi\circ\alpha
\]
and $\psi$~is a hypergraph isomorphism from $\calI^{\lambda(C)}=\calA$ to~$\calJ^{\mu(C)}$.
It follows that, like~$\alpha$, the constructed mapping $\mu^C$~is a Helly arc representation of~$G$.
\end{description}

The proof of Theorem~\ref{thm:main} is complete.


\end{document}
